\newtheorem{definition}{Definition}[section]
\newtheorem{theorem}{Theorem}[section]
\newtheorem{lemma}{Lemma}[section]
\newcommand{\myred}{red!80!black}
\newcommand{\mygreen}{green!80!black}
\newcommand{\myblue}{blue!80!black}
\newcommand{\etal}{{\it et al}\xspace}
\newcommand{\ie}{{\it i.e.}\xspace}
\newcommand{\etc}{{\it etc.}\xspace}
\newcommand{\cpp}{{\tt C++}\xspace}
\newcommand{\B}{\mathcal{B}}
\newcommand{\Nc}{N^{C}}
\newcommand{\Ec}{E^{C}}
\newcommand{\Gc}{G^{C}}
\newcommand{\Np}{N^{P}}
\newcommand{\lra}{\Leftrightarrow}
\renewcommand{\l}{\ell}
\newcommand{\bu}{{b_U}}
\newcommand{\bv}{{b_V}}
\newcommand{\du}{{d_U}}
\newcommand{\dv}{{d_V}}
\newcommand{\ddu}{{{d_2}_U}}
\newcommand{\ddv}{{{d_2}_V}}
\newcommand{\zu}{{\zeta_U}}
\renewcommand{\nu}{{n_U}}
\newcommand{\nv}{{n_V}}
\DeclareMathOperator*{\argmin}{argmin}
\newcommand{\return}{{\bf return}\xspace}
\newcommand{\each}{{\bf each}\xspace}
\newcommand{\bbk}{\textsc{bbk}\xspace}
\newcommand{\oombea}{\textsc{oombea}\xspace}
\newcommand{\ucforum}{{\it UC-Forum}\xspace}
\newcommand{\discogs}{{\it Discogs}\xspace}
\newcommand{\citeseer}{{\it CiteSeer}\xspace}
\newcommand{\marvel}{{\it Marvel}\xspace}
\newcommand{\dbpedia}{{\it DBpedia}\xspace}
\newcommand{\actormovie}{{\it Actor-Movie}\xspace}
\newcommand{\pics}{{\it Pics}\xspace}
\newcommand{\wikilens}{{\it WikiLens}\xspace}
\newcommand{\bookcrossing}{{\it BookCrossing}\xspace}
\newcommand{\github}{{\it GitHub}\xspace}
\newcommand{\dailykos}{{\it DailyKos}\xspace}
\newcommand{\filmtrust}{{\it FilmTrust}\xspace}
\newcommand{\citeulike}{{\it CiteULike}\xspace}
\newcommand{\reuters}{{\it Reuters}\xspace}
\newcommand{\bibsonomy}{{\it BibSonomy}\xspace}
\newcommand{\tvtropes}{{\it TV-Tropes}\xspace}
\newcommand{\dvdciao}{{\it DVD-Ciao}\xspace}
\newcommand{\youtube}{{\it YouTube}\xspace}
\newcommand{\nips}{{\it NIPS-Papers}\xspace}
\newcommand{\movie}{{\it MovieLens}\xspace}
\newcommand{\convexpath}[2]{
[   
create hullnodes/.code={
\global\edef\namelist{#1}
\foreach [count=\counter] \nodename in \namelist {
\global\edef\numberofnodes{\counter}
\node at (\nodename) [draw=none,name=hullnode\counter] {};
}
\node at (hullnode\numberofnodes) [name=hullnode0,draw=none] {};
\pgfmathtruncatemacro\lastnumber{\numberofnodes+1}
\node at (hullnode1) [name=hullnode\lastnumber,draw=none] {};
},
create hullnodes
]
($(hullnode1)!#2!-90:(hullnode0)$)
\foreach [
evaluate=\currentnode as \previousnode using \currentnode-1,
evaluate=\currentnode as \nextnode using \currentnode+1
] \currentnode in {1,...,\numberofnodes} {
let
\p1 = ($(hullnode\currentnode)!#2!-90:(hullnode\previousnode)$),
\p2 = ($(hullnode\currentnode)!#2!90:(hullnode\nextnode)$),
\p3 = ($(\p1) - (hullnode\currentnode)$),
\n1 = {atan2(\y3,\x3)},
\p4 = ($(\p2) - (hullnode\currentnode)$),
\n2 = {atan2(\y4,\x4)},
\n{delta} = {-Mod(\n1-\n2,360)}
in 
{-- (\p1) arc[start angle=\n1, delta angle=\n{delta}, radius=#2] -- (\p2)}
}
-- cycle
}
\begin{document}


\title{BBK: a simpler, faster algorithm for enumerating maximal bicliques in large sparse bipartite graphs}

\author[1]{Alexis Baudin}
\author[1]{Clémence Magnien}
\author[1]{Lionel Tabourier}
\affil[1]{Sorbonne Université, CNRS, LIP6, F-75005 Paris \protect\\
  {\tt firstname.lastname@lip6.fr}}

\date{}

\maketitle


\begin{abstract}
  Bipartite graphs are a prevalent modeling tool for real-world networks, capturing interactions between vertices of two different types.
Within this framework, bicliques emerge as crucial structures when studying dense subgraphs:
they are sets of vertices such that all vertices of the first type interact with all vertices of the second type.
Therefore, they allow identifying groups of closely related vertices of the network, such as individuals with similar interests or webpages with similar contents.
This article introduces a new algorithm designed for the exhaustive enumeration of maximal bicliques within a bipartite graph.
This algorithm, called \bbk for \emph{Bipartite Bron-Kerbosch}, is a new extension to the bipartite case of the Bron-Kerbosch algorithm, which enumerates the maximal cliques in standard (non-bipartite) graphs.
It is faster than the state-of-the-art algorithms and allows the enumeration on massive bipartite graphs that are not manageable with existing implementations. 
We analyze it theoretically to establish two complexity formulas: one as a function of the input and one as a function of the output characteristics of the algorithm.
We also provide an open-access implementation of \bbk in \cpp, which we use to experiment and validate
its efficiency on massive real-world datasets and show that its execution time is shorter in practice than state-of-the art algorithms.
These experiments also show that the order in which the vertices are processed, as well as the choice of one of the two types of vertices on which to initiate the enumeration have an impact on the computation time.

\end{abstract}

\textbf{Keywords:} Maximal biclique enumeration, bipartite graph, Bron-Kerbosch algorithm, complexity, massive real-world datasets, cliques, bicliques.

\section{Introduction}

Bipartite graphs are widely used to represent real-world networks~\cite{guillaume2006bipartite}. 
They can model many systems where two different types of entities interact.
They are thus widely used to describe social systems such as online platforms where users select content (watch videos, click on links, buy products, \etc)~\cite{zhou2007bipartite,zhu2015measuring}, or individuals taking part in projects or events~\cite{newman2001scientific,li2014name,coates2020transfer}.
It is also a popular representation for biological systems~\cite{huber2007graphs,pavlopoulos2018bipartite}, or for ecological networks~\cite{dormann2009indices,simmons2019motifs,chabert2022impact}.
As with non-bipartite graphs, the identification of dense subgraphs in these networks is important for analyzing their structure and understanding their functioning~\cite{broder2000graph,yoon2007co,lehmann2008biclique}: 
for example, it can reveal users with common interests~\cite{muhammad2016summarizing}, or information about the organization of proteins~\cite{bu2003topological,voggenreiter2012exact}.
Also, as bipartite graphs can represent sets of items, with one type of nodes representing baskets or sets and the other the items themselves~\cite{makino2004new,zaki1998theoretical},  enumerating dense subgraphs in bipartite graphs has a close connection with mining frequent itemsets in databases, a long-standing task in data mining~\cite{borgelt2012frequent}, with various applications such as finding association rules in large databases~\cite{agrawal1994fast}.

A bipartite graph is a triplet $G = (U,V,E)$, where $U$ and $V$ are two sets of disjoint vertices, and $E$ is a set of edges between elements of $U$ and elements of $V$: $E \subseteq U \times V$.
Graphs are undirected, so there is no distinction between an edge $(u,v)$ and $(v,u)$.
Throughout this paper, if $A$ is a set of vertices of $G=(U,V,E)$, then we denote by $A_U$ the set of vertices of $A$ that are in $U$, \ie $A \cap U$, and $A_V$ the set of vertices of $A$ that are in $V$: $A \cap V$. 
A biclique of $G$ is a set $C \subseteq U \cup V$ such that the vertices of $C_U$ are all connected to the vertices of $C_V$. 
It is said to be maximal when it is not included in any other biclique.
An example is given in Figure~\ref{fig:ex-graphebip}.

\begin{figure}[!hbtp]
  \centering
  
  \begin{tikzpicture}[
    every node/.style={circle, draw, fill=black!30, minimum width=8mm, minimum height=6mm},
    ]
    
    \node (3) at (0, 0) {3};
    \node (2) at (-3, 0) {2};
    \node (1) at (-6,0) {1};
    \node (E) at (1, -2) {E};
    \node (D) at (-1,-2) {D};
    \node (C) at (-3, -2) {C};
    \node (B) at (-5, -2) {B};
    \node (A) at (-7, -2) {A};
    \draw (1) -- (A);
    \draw (1) -- (B);
    \draw (B) -- (2);
    \draw (B) -- (3);
    \draw (2) -- (C);
    \draw (2) -- (D);
    \draw (C) -- (3);
    \draw (D) -- (3);
    \draw (E) -- (3);

    \begin{pgfonlayer}{background}

      \foreach \nodename in {1,2,3,A,B,C,D,E} {
        \coordinate (\nodename') at (\nodename);
      }

      \path[fill=\myblue,opacity=0.2] \convexpath{B,3,E,D,C}{14pt};
      \path[draw=\myblue, dashed, line width=1pt, opacity=0.8] \convexpath{B,3,E,D,C}{14pt};
      
      \path[fill=\mygreen,opacity=0.2] \convexpath{A,1,B}{20pt};
      \path[draw=\mygreen, dashed, line width=1pt, opacity=0.8] \convexpath{A,1,B}{20pt};

      \path[fill=\myred,opacity=0.2] \convexpath{B,2,3,D,C}{19pt};
      \path[draw=\myred, dashed, line width=1pt, opacity=0.8] \convexpath{B,2,3,D,C}{19pt};
      
    \end{pgfonlayer}
  \end{tikzpicture}

  \caption{Example of a bipartite graph, with three maximal bicliques circled in color: $\{1,A,B\}$, $\{2,3,B,C,D\}$ and $\{3,B,C,D,E\}$.
    Note that this graph has two other maximal bicliques, $\{1,2,3,B\}$ and $\{A,B,C,D,E\}$, not represented here for the sake of clarity.
  }
  \label{fig:ex-graphebip}
\end{figure}
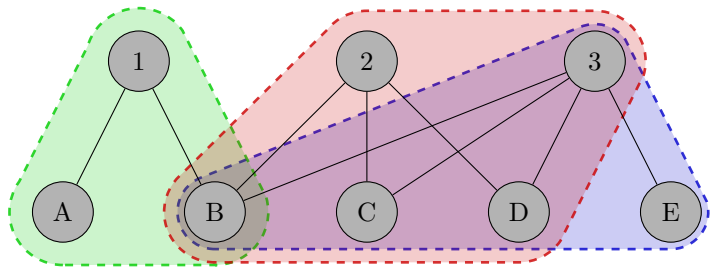

The enumeration of bicliques, particularly in large bipartite graphs, has been the subject of much work. 
Recent works have notably improved the state-of-the-art of maximal biclique enumeration by adapting the famous Bron-Kerbosch algorithm~\cite{bron1973algorithm} to this context.
In this paper, we propose the \bbk algorithm, which is also an adaptation of the Bron-Kerbosch algorithm to bipartite graph, but in a way that we believe to be simpler and proves to be more time-efficient than the current standards. We provide an open-access C++ implementation of this algorithm\,\footnote{\url{https://gitlab.lip6.fr/baudin/bbk}}.

The rest of the article is organized as follows: in Section~\ref{sec:related} we present the state of the art of maximal biclique enumeration in bipartite graphs; in Section~\ref{sec:algorithm} we introduce and detail our new algorithm \bbk; in Section~\ref{sec:complexity} we formalize complexities of this algorithm, on the one hand as a function of its input characteristics and on the other hand as a function of its output characteristics; 
in Section~\ref{sec:experiments} we carry out an extensive experimental study of the \bbk algorithm, comparing its implementation with the best in the state of the art to validate its efficiency on massive real datasets, and showing to what extent the choice of one of the two vertex sets as a starting point for \bbk has an impact on computation time; 
finally, in Section~\ref{sec:conclusion} we conclude the study and present the perspectives of this work.

\section{Related work}
\label{sec:related}

In the literature, numerous works have been devoted to enumerating maximal bicliques.
Algorithms for exhaustive enumeration of maximal bicliques in bipartite graphs were introduced and experimented in the 2000s~\cite{makino2004new}. They have since been improved several times. In 2014, Damaschke~\etal.~\cite{damaschke2014enumerating} developed an algorithm that improves the enumeration on bipartite graphs with a heterogeneous degree distribution, justified by the fact that this is the case for real-world graphs, with a theoretical complexity suited to the case of these particular graphs. 
At the same period, Zhang~\etal.~\cite{zhang2014finding} implemented the first adaptation of the Bron-Kerbosch algorithm~\cite{bron1973algorithm}, which is the reference algorithm for enumerating maximal cliques in non-bipartite graphs.
Using this new algorithm, they handled larger datasets than what was previously possible, in particular biological datasets.
Later, in 2018, Das~\etal.~\cite{das2018shared} proposed a parallel algorithm to enumerate maximal bicliques on massive datasets.
In 2020, Qin~\etal.~\cite{qin2020efficient} developed an enumeration method which is supposed to be well suited to unbalanced graphs, however they did not provide an implementation of their algorithm. 
Then, Abidi~\etal.~\cite{abidi2020pivot} improved the adaptation of the Bron-Kerbosch algorithm of Zhang~\etal.~\cite{zhang2014finding} by reducing the search space using a pivot,
which is a technique to reduce the number of recursive calls of the main function and therefore speed up the execution.
Hriez~\etal. proposed in 2021 an interesting method to achieve the enumeration~\cite{hriez2021effective}, which consists in adding edges (a process known as graph inflation) to simplify the algorithm. However, as described the method cannot properly scale to massive graphs from the real world.
Finally, Chen~\etal.~\cite{chen2022efficient} proposed another improvement on the algorithm of Abidi~\etal.~\cite{abidi2020pivot}, which performs much better than all the methods of the state-of-the-art.
For this reason, we take Chen~\etal. method, called \oombea, as a reference for comparison to our own maximal biclique enumeration algorithm throughout this article.

As it plays an important role in our study, we give additional details about \oombea.
It is inspired by the Bron-Kerbosch algorithm~\cite{bron1973algorithm}.
It is a recursive algorithm that identifies a given biclique $C$ by its subset of $C_U$: the vertices in $C_V$ are those that are neighbors to all vertices in $C_U$.
The function then considers the {\em candidate} vertices $u \in U$ such that there are maximal bicliques containing $C_U \cup \{u\}$ that have not already been enumerated, and makes a recursive call on each of these vertices.
Besides that, the authors also seem to take inspiration from the work of Eppstein~\etal.~\cite{eppstein2010listing}, which is one of the most efficient implementation of Bron-Kerbosch algorithm for maximal clique enumeration on real-world non-bipartite graphs.
Eppstein~\etal. introduced the idea of processing vertices according to a degeneracy order~\cite{batagelj2003m} to reduce 
the number of candidate vertices on which recursive calls are made.
Similarly, Chen~\etal. considered one of the two sets of vertices of the bipartite graph, which we denote by $U$ in this paragraph, and enumerate the maximal bicliques starting from each vertex of $U$ according to a given order which, like Eppstein~\etal., tends to reduce the size of the set of candidate vertices.
The order that \oombea uses is a degeneracy order of the bipartite graph projected onto $U$,
\ie the graph whose vertex set is $U$ and two vertices are linked if they have a common neighbor in $V$.
Chen~\etal. showed that the complexity of this algorithm is in~$\mathcal{O}\left(\nu \cdot \zu \cdot m \cdot 2^{\zu}\right)$, where $\nu$ is the number of vertices in $U$, $\zu$ is the degeneracy of the graph projected onto $U$ and $m$ is the number of edges in $G$. 
They also expressed the complexity of \oombea in terms of the number of maximal bicliques $\B$ as $\mathcal{O}\left(\zu \cdot m \cdot \B\right)$. 
Note that even for a sparse graph, the graph projected onto $U$ can be dense, and therefore can have a high $\zu$ value.

\section{New maximal biclique enumeration algorithm}
\label{sec:algorithm}

In this section, we use a graph-inflation process to transform the bipartite graph and then
apply directly the Bron-Kerbosch algorithm~\cite{bron1973algorithm} to enumerate maximal bicliques.
We can then use the version of this algorithm from Eppstein~\etal.~\cite{eppstein2010listing},
which processes the vertices in a specific order to improve the computation time for massive real-world graphs. 
To do this, we define what we call a \emph{bidegeneracy order} of vertices, adapted from the degeneracy order proposed by Eppstein~\etal.
This provides a simple way to perform the maximal biclique enumeration.
However, this first approach is inefficient, so we use properties of neighborhood in bipartite graphs to enhance its performances in the \bbk algorithm.

\subsection{Clique-extended graph of a bipartite graph}

To extend the enumeration of maximal cliques in graphs to the enumeration of maximal bicliques in bipartite graphs, we define the \emph{clique-extended graph} of a bipartite graph
by adding edges between all the vertices of $U$, and between all the vertices of $V$.
We call this graph $\Gc$ and define it formally below.

\begin{definition}[Clique-extended graph of a bipartite graph]
  The \emph{clique-extended graph of the bipartite graph} $G$ is the graph $\Gc = (U \cup V, \Ec)$, where: 
  $$\Ec{} = E \cup \{(u_1,u_2) \in U^2 \ | \ u_1 \neq u_2\} \cup \{(v_1,v_2) \in V^2 \ | \ v_1 \neq v_2\}.$$
\end{definition}

This clique-extended graph induces a notion of neighborhood, which we call \emph{clique-extended neighborhood}, noted~$\Nc$:

\begin{definition}[Clique-extended neighborhood of a vertex]
  Let $x \in U \cup V$. The \emph{clique-extended neighborhood of $x$} corresponds to the neighbors of $x$ in the clique-extended graph $\Gc$. It is denoted by $\Nc(x)$:
  $$\Nc(x) =
  \begin{cases}
    N(x) \cup U \setminus \{x\} \text{ if } x \in U\\
    N(x) \cup V \setminus \{x\} \text{ if } x \in V.
  \end{cases}
  $$
\end{definition}

The clique-extended graph has a particular property, that we exploit for the \bbk algorithm: a set of vertices that forms a (maximal) clique of $\Gc$ equivalently forms a (maximal) biclique of $G$. 
This result was introduced by Gély~\etal.~\cite{gely2009enumeration}:

\begin{theorem}
  Let $G = (U,V,E)$ be a bipartite graph. Then the maximal cliques of $\Gc$ correspond to the maximal bicliques of $G$:
  $$C \text{ is a maximal clique of } \Gc \lra C \text{ is a maximal biclique of } G.$$
  \label{thm:graphe-clique-eq}
\end{theorem}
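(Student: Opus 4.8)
The plan is to reduce the maximality equivalence to a single structural identity: the cliques of $\Gc$ are exactly the bicliques of $G$, viewed as subsets of $U \cup V$. Once this set-level identity is in hand, the equivalence of the \emph{maximal} elements follows with no further work, since maximality under inclusion is determined entirely by which subsets belong to the family.

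First I would prove the core claim: for every $C \subseteq U \cup V$, $C$ is a clique of $\Gc$ if and only if $C$ is a biclique of $G$. To do this I would split $C$ into $C_U = C \cap U$ and $C_V = C \cap V$ and inspect the three kinds of vertex pairs inside $C$. By the construction of $\Ec$, every pair of distinct vertices within $C_U$ is joined (because $U$ induces a clique in $\Gc$), and likewise every pair within $C_V$; these two families of constraints are therefore satisfied automatically and contribute nothing. The only remaining pairs have one endpoint in $C_U$ and one in $C_V$, and in $\Gc$ such a pair is an edge precisely when it already lies in $E$. Hence $C$ is a clique of $\Gc$ exactly when every $u \in C_U$ is adjacent in $G$ to every $v \in C_V$, which is by definition the condition for $C$ to be a biclique of $G$.

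Having shown that the family of cliques of $\Gc$ coincides with the family of bicliques of $G$, I would close the argument. A set $C$ is a maximal clique of $\Gc$ iff it is a clique and no clique of $\Gc$ strictly contains it, while $C$ is a maximal biclique of $G$ iff it is a biclique and no biclique of $G$ strictly contains it. Because the two families are identical, these two maximality statements are literally the same statement, which yields the claimed equivalence $C \text{ maximal clique of } \Gc \lra C \text{ maximal biclique of } G$.

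I do not anticipate any serious obstacle: the whole argument is a direct unfolding of the definitions of $\Ec$ and of a biclique. The only point that warrants a moment of care is the degenerate situation in which $C_U$ or $C_V$ is empty; there the biclique condition is vacuous and $C$ reduces to a set contained entirely in $U$ or in $V$, which is a clique of $\Gc$ on one side and a (trivial) biclique of $G$ on the other, so the correspondence still holds without exception.
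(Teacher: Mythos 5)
Your proof is correct. Note that the paper does not actually prove Theorem~\ref{thm:graphe-clique-eq}; it simply cites Gély \emph{et al.} for the result, so there is no in-paper argument to compare against. Your two-step argument --- first establishing that the clique family of $\Gc$ and the biclique family of $G$ coincide as set families (the cross pairs $C_U \times C_V$ being the only non-automatic adjacency constraints), then observing that maximality under inclusion depends only on the family --- is the standard and complete way to prove it, and your handling of the degenerate case where $C_U$ or $C_V$ is empty is consistent with the paper's definition of a biclique, which does not require both sides to be nonempty.
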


This theorem induces a direct method for enumerating the maximal bicliques of a bipartite graph: it is sufficient to enumerate the maximal cliques of its extended graph. 
Algorithm~\ref{algo:bkmbe-etendu} proposes the pseudocode for this method: it follows the Eppstein~\etal. algorithm~\cite{eppstein2010listing}.
\begin{algorithm}[!h]
  \DontPrintSemicolon
  \KwIn{Bipartite graph $G = (U,V,E)$.} 
  \KwOut{Set of maximal bicliques of $G$.}
  \For {\each $x_i$ in a degeneracy order $x_1, x_2, \dots, x_n$ of $U \cup V$\label{line:bkmbe-e:for-order}}
  {
    $P_i \gets \Nc(x_i) \setminus \{x_1, \dots, x_{i-1}\}$ \label{line:bkmbe-e:Pinit}\;
    $X_i \gets \Nc(x_i) \cap \{x_1, \dots,x_{i-1}\}$ \;
    \BK{$\{x_i\}$, $P_i$, $X_i$} \label{line:bkmbe-e:initBK}\;
  }
  \Fn{\BK{$R$, $P$, $X$}}{
    \If{$P \cup X = \emptyset$\label{line:bkmbe-e:maxtest}} { 
      \return $R$ maximal biclique \label{line:bkmbe-e:return} \;
    }
    $p \gets$ pivot in $P \cup X$ \label{line:bkmbe:pivot} \;
    \For {$x \in P \setminus \Nc(p)$\label{line:bkmbe:elagage}} { 
      \BK{$R \cup \{x\}$, $P \cap \Nc(x)$, $X \cap \Nc(x)$} \label{line:bkmbe-e:RC} \;
      $P \gets P \setminus \{x\}$ \;
      $X \gets X \cup \{x\}$ \label{line:bkmbe:updateX}\;
    }
  }
  \caption{Enumerate the maximal bicliques using the extended graph.}
  \label{algo:bkmbe-etendu}
\end{algorithm}

\subsection{BBK: a new algorithm for maximal biclique enumeration}

Algorithm~\ref{algo:bkmbe-etendu} is straightforward, but it cannot be used in practice for graphs containing many vertices, as the sets of candidates $P_i$ defined at Line~\ref{line:bkmbe-e:Pinit} 
can be larger than $U$ or $V$.
To overcome this issue and provide an algorithm usable efficiently on sparse, massive, bipartite graphs, we develop a revised version which takes advantage of the bipartite nature of the graph.
This allows to reduce the size of the candidate sets and to use $N$ instead of $\Nc$ in the main function, which induces much fewer neighbors.
In addition, we refine the biclique maximality test to perform it earlier in the process; 
we also limit the enumeration to bicliques containing a vertex $u$ for each $u \in U$ instead of browsing all $U \cup V$;
finally, we order the vertices to improve the efficiency of the enumeration by introducing the notion of a \emph{bidegeneracy order}.
Each of these points is detailed in the rest of this section.

We call this new algorithm \bbk, and its pseudocode is given in Algorithm~\ref{algo:bbk}. 
Let us first summarize the workings of \bbk before going into its details:
\begin{itemize}
\item Lines~\ref{line:bbk:forU} to~\ref{line:bbk:initBK} give a more efficient way of initializing biclique enumerations on each vertex, to be compared with Lines~\ref{line:bkmbe-e:for-order} to~\ref{line:bkmbe-e:initBK} of Algorithm~\ref{algo:bkmbe-etendu};
\item Lines~\ref{line:bbk:ifPUPVempty} to~\ref{line:bbk:stopenum} are an improvement on the maximality test for bicliques performed in Lines~\ref{line:bkmbe-e:maxtest} to~\ref{line:bkmbe-e:return} of Algorithm~\ref{algo:bkmbe-etendu};
\item finally, Lines~\ref{line:bbk:pivot} to~\ref{line:bbk:updateX} are equivalent to Lines~\ref{line:bkmbe:pivot} to~\ref{line:bkmbe:updateX} of Algorithm~\ref{algo:bkmbe-etendu}, but adapted to use the neighborhood $N$ in the bipartite graph, instead of $ \Nc $.
\end{itemize}

\begin{algorithm}[!hbtp]
  \DontPrintSemicolon
  \KwIn{Bipartite graph $G = (U,V,E)$.} 
  \KwOut{Set of maximal bicliques of $G$.}
  \tcp{More efficient initialization:}
  \For {\each $u_i$ in a bidegeneracy order $u_1, u_2, \dots, u_n$ \mbox{of } $U$\label{line:bbk:forU}}
  {
    $P_i \gets \Np(u_i) \setminus \{u_1, \dots, u_{i-1}\}$ \label{line:bbk:Pui} \;
    $X_i \gets \Np(u_i) \cap \{u_1, \dots,u_{i-1}\}$ \label{line:bbk:Xui} \;
    \BKMBE{$\{u_i\}$, $P_i$, $X_i$} \label{line:bbk:initBK} \;
  }
  
  \Fn{\BKMBE{$R$, $P$, $X$}}{
    \tcp{Maximality test:}
    \If{($P_U = \emptyset$ {\bf or} $P_V = \emptyset$) {\bf and} $X = \emptyset$\label{line:bbk:ifPUPVempty}} { 
      \return $R \cup P$ maximal biclique \label{line:bbk:returnbiclique} \;
    }
    \If {($P_U = \emptyset$ {\bf and} $X_V \neq \emptyset$) {\bf or} ($P_V = \emptyset$ {\bf and} $X_U \neq \emptyset$)\label{line:bbk:maxtest2}} {
      \return \label{line:bbk:stopenum} \;
    }

    \tcp{\BK adapted operations to leverage bipartite nature of the graph:}
    $p \gets$ pivot in $P \cup X$ \label{line:bbk:pivot} \;
    \If {$p \in U$\label{line:bbk:ifpU}} {
      $Q \gets P_V \setminus N(p)$ \label{line:bbk:pruneU}\;
    }
    \Else {
      $Q \gets P_U \setminus N(p)$ \label{line:bbk:pruneV}\;
    }
    \If{$p \in P$} {
      $Q \gets \{p\} \cup Q$ \label{line:bbk:Qp} \;
    }
    \For {$x \in Q$\label{line:bbk:forxQ}} {
      \If {$x \in U$} {
        $P_x \gets (P_U \setminus \{x\}) \cup (P_V \cap N(x))$ \;
        $X_x \gets X_U \cup (X_V \cap N(x))$ \label{line:bbk:XxU}\;
      }
      \Else {
        $P_x \gets (P_V \setminus \{x\}) \cup (P_U \cap N(x))$ \;
        $X_x \gets X_V \cup (X_U \cap N(x))$ \;
      }
      \BKMBE{$R \cup \{x\}$, $P_x$, $X_x$} \label{line:bbk:RC}\;
      $P \gets P \setminus \{x\}$ \;
      $X \gets X \cup \{x\}$ \label{line:bbk:updateX} \;
    }
  }
  \caption{\bbk: Bron-Kerbosch adapted to maximal biclique enumeration.}
  \label{algo:bbk}
\end{algorithm}

\subsubsection{Lines~\ref{line:bbk:forU} to~\ref{line:bbk:initBK}: efficient initialization of the biclique enumeration}
\label{subsubsec:bidegen}

\paragraph{Set of vertices for the initialization}

The initialization of calls to \BK performed at Line~\ref{line:bkmbe-e:initBK} of Algorithm~\ref{algo:bkmbe-etendu} can be  improved by performing the following two operations.

Firstly, it is not necessary to enumerate the set of maximal bicliques that contain $x$ for each vertex $x \in U \cup V$. 
Instead, we can simply list the maximal bicliques that contain a vertex $u \in U$. 
Indeed, $V$ is the only biclique that contains no vertices in $U$ and that can be maximal; it can therefore be added to the enumeration outside the core of the algorithm. 
Thus, in Algorithm~\ref{algo:bbk}, the loop starting at Line~\ref{line:bbk:forU} is only performed on $U$.
Note that this idea has already been used by Chen~\etal.~\cite{chen2022efficient} in \oombea.

Secondly, to enumerate the maximal cliques in a graph, the Bron-Kerbosch algorithm uses the fact that the neighbors of $u$ are the vertices belonging to a clique which contains $u$.
In a bipartite graph, this is not the case: if $u \in U$, the set of vertices that are in a biclique $C$ which contains $u$, when $C_V \neq \emptyset$, are \emph{the vertices of $N(u) \cup N_2(u)$} where $N_2(u)$ is the set of neighbors of $u$'s neighbors excluding $u$ itself.
Note that this observation was made by Hermelin and Manoussakis~\cite{hermelin2021efficient}.
Therefore, we formalize below this particular neighborhood of $u$ as it plays an important role in the description of the \bbk algorithm.

\begin{definition}[Projection-extended neighborhood]
  \label{def:Np}
  Let $u \in U$. We call the vertices of $N(u) \cup N_2(u)$
  the \emph{projection-extended neighborhood of $u$} and denote it by $\Np(u)$, where $N_2(u)$ is the set of neighbors of $u$'s neighbors excluding $u$ itself.
\end{definition}

We use this projection-extended neighborhood by searching, for each vertex $u \in U$, the maximal bicliques that contain $u$ among the vertices of the set $\Np(u)$.
Thus, the sets $\Nc(u_i)$ at Line~\ref{line:bkmbe-e:for-order} of Algorithm~\ref{algo:bkmbe-etendu} are replaced by the sets $\Np(u_i)$ at Line~\ref{line:bbk:forU} of Algorithm~\ref{algo:bbk}, which are much smaller in practice (see Section~\ref{sec:experiments}).

Note that this reasoning ignores the biclique $U$ which is the only one which does not contain any vertex in $V$.
As above, this biclique can be added to the enumeration outside the core of the algorithm if it is maximal.

\paragraph{Bidegeneracy order of vertices}
Eppstein~\etal.~\cite{eppstein2010listing} have shown that the order of vertices has a significant impact on the enumeration efficiency on non-bipartite real-world graphs.
They use a degeneracy order to reduce the maximum size of the candidate vertex sets $P_i$ 
on which recursive calls are made.
We extend this concept by introducing the notion of a \emph{bidegeneracy order} on $U$:

\begin{definition}[Bidegeneracy order of $U$]
  A \emph{bidegeneracy order of $U$} is an order $u_1,\dots,u_n$ such that $u_i$ is a vertex of $U$ of minimal number of projection-extended neighbors in the subgraph induced by the vertices $u_i, u_{i+1}, \ldots, u_n$ and their neighbors. In other words, for all~$i \in \{1,\dots,n\}$, 
  $$u_i = \argmin_{u \in \{u_i, \ldots, u_n\}} \left( |\Np(u) \setminus \{u_1,\dots,u_{i-1}\}| \right).$$
\end{definition}

Such an order is obtained by iteratively selecting an unselected vertex $u \in U$ that minimizes $\Np(u)$, then updating the sets in $\Np$ by deleting $u$.
The objective of using a bidegeneracy order of $U$ is to reduce the maximum size of the candidate sets $P_i = \Np(u_i) \setminus \{u_1,\dots,u_{i-1}\}$ on which the enumeration is performed at Line~\ref{line:bbk:initBK} of Algorithm~\ref{algo:bbk}.
To quantify this maximum size, we introduce below the notions of the \emph{bidegeneracy of a vertex} and the \emph{bidegeneracy of a set of vertices}.

\begin{definition}[Bidegeneracy of a vertex]
  The \emph{bidegeneracy of a vertex} $u \in U$ is the maximum value $b$ such that there exists $U' \subseteq U$ with $u \in U'$ verifying $\forall x \in U', \ \left| \Np(x) \cap (U' \cup V) \right| \geq b$.
  We denote it by $b(u)$.
  If $u \in V$, its bidegeneracy is defined symmetrically by inverting $U$ and $V$.
\end{definition}

\begin{definition}[Bidegeneracy of $U$ and $V$]
  The bidegeneracy of $U$, denoted by $\bu$, is the maximum bidegeneracy of the vertices of $U$. The bidegeneracy of $V$, denoted $\bv$, is defined similarly on $V$.
\end{definition}

For example, let us consider the graph of Figure~\ref{fig:ex-graphebip} with $U = \{A,B,C,D,E\}$ and $V = \{1,2,3\}$. 
In this example,
$\Np(A) = \{1, B\} $, so if we set $U' = \{A,B\}$, then $\forall x \in U', \ |\Np(x) \cap (U' \cup V)| \geq 2$; moreover there is no set which yields a larger value, thus $ b(A) = 2$.
We can show similarly that the bidegeneracy of  $B$, $C$, $D$ and $E$ is $4$, thus $b_U = 4$.

Thanks to the use of a bidegeneracy order, we can show that the size of a candidate set $P_i = \Np(u_i) \setminus \{u_1,\dots,u_{i-1}\}$ at Line~\ref{line:bbk:Pui} of Algorithm~\ref{algo:bbk} 
is at most $b(u_i)$ and therefore the maximum size of this set over all vertices in $U$ is $\bu$. 
Indeed, when $u_i$ is selected following such an order, it is a vertex of minimal number of projection-extended neighbors in the subgraph induced by $u_i,u_{i+1},\dots,u_n$ and their neighbors.
In other words, if we set $U'=\{u_i, \ldots, u_n\}$, then
$\forall u \in U', |\Np(u) \setminus \{u_1,\dots,u_{i-1}\}|
= |\Np(u) \cap (U' \cup V)| \ge |\Np(u_i) \setminus \{u_1,\dots,u_{i-1}\}|$.
Since the bidegeneracy of $u_i$ is the largest value over all sets $U'$ 
satisfying the inequality above,
we obtain that $b(u_i) \ge |\Np(u_i) \setminus \{u_1,\dots,u_{i-1}\}|$.
This improvement is efficient in that the bidegeneracy is much smaller in practice than the maximum value of $|\Np(u)|$ for $u \in U$ (see Section~\ref{sec:experiments}).
This is particularly important as the enumeration within one of these sets is exponential in the size of that set, as detailed in Section~\ref{sec:complexity}.

\subsubsection{Lines~\ref{line:bbk:ifPUPVempty} to~\ref{line:bbk:stopenum}: improving the biclique maximality test}

The maximality test performed at Line~\ref{line:bkmbe-e:maxtest} of Algorithm~\ref{algo:bkmbe-etendu} can be performed earlier by taking into account the bipartite nature of the graph. 
Indeed, if $P_U = \emptyset$, then there are two cases which require no recursive call on any vertex of $P_V$ and that can be tested in constant time:

\begin{itemize}
\item If $X = \emptyset$, then $R_U \cup R_V \cup P_V$, which is a biclique by construction, is maximal and can therefore be output without making further recursive calls
  (Lines~\ref{line:bbk:ifPUPVempty} - \ref{line:bbk:returnbiclique}).
\item If $X_V \neq \emptyset$, then $X_V$ cannot be modified again, as $X_V$ is only modified at Lines~\ref{line:bbk:XxU} when a vertex of $P_U$ is added to the clique under construction. 
  Therefore, $P_U = \emptyset$ and $X_V \neq \emptyset$ for all the subcalls launched at Line~\ref{line:bbk:RC}, and then none of them can lead to a maximal biclique (which would require $X_V$ to be empty).
  Thus, the call can be stopped there (Lines~\ref{line:bbk:maxtest2} - \ref{line:bbk:stopenum}).
\end{itemize}

The same observations can be made when swapping the roles of $U$ and $V$.
Altogether, Lines~\ref{line:bkmbe-e:maxtest} to~\ref{line:bkmbe-e:return} in Algorithm~\ref{algo:bkmbe-etendu} are replaced by Lines~\ref{line:bbk:ifPUPVempty} to~\ref{line:bbk:stopenum} in Algorithm~\ref{algo:bbk}.

\subsubsection{Lines~\ref{line:bbk:pivot} to~\ref{line:bbk:updateX}:  using $N$ instead of $\Nc$}

The neighborhoods $\Nc$ in Algorithm~\ref{algo:bkmbe-etendu}  are usually too large to be processed efficiently on massive graphs.
Fortunately, it is not necessary to store and manipulate them in practice.
Indeed, it is possible to compute the sets handled in Lines~\ref{line:bkmbe:pivot} to~\ref{line:bkmbe:updateX} of Algorithm~\ref{algo:bkmbe-etendu} by considering only the bipartite neighborhood $N$.
In this sense, we show that Lines~\ref{line:bbk:pivot} to~\ref{line:bbk:updateX} of Algorithm~\ref{algo:bbk} are equivalent
to those lines.

When $x \in U$, the set $P \cap \Nc(x)$ (Line~\ref{line:bkmbe-e:RC} of Algorithm~\ref{algo:bkmbe-etendu}) is equal to $P \cap (N(x) \cup U \setminus \{x\})$, thus, it is equal to $(P_U \setminus \{x\}) \cup (P_V \cap N(x))$.
Symmetrically, when $x \in V$, $P \cap \Nc(x)$ is equal to $(P_V \setminus \{x\}) \cup (P_U \cap N(x))$.
The same applies to $X \cap \Nc(x)$, following the same reasoning.
This leads to the sets $P_x$ and $X_x$ defined within the loop starting at Line~\ref{line:bbk:forxQ} of Algorithm~\ref{algo:bbk}.

We can apply the same reasoning to the pruning of the pivot occurring at Line~\ref{line:bkmbe:elagage} of Algorithm~\ref{algo:bkmbe-etendu}. 
Indeed, if $p \in U$, $P \setminus \Nc(p) = P \setminus (N(p) \cup U \setminus \{p\})
= (P_U \setminus (U \setminus \{p\})) \cup (P_V \setminus N(p))
= \{p\} \cup (P_V \setminus N(p))$ if $p \in P$, and $P \setminus \Nc(p) = P_V \setminus N(p)$ if $p \notin P$ (Lines~\ref{line:bbk:pruneU} and~\ref{line:bbk:Qp}).
Similarly, if $p \in V$ then $P \setminus \Nc(p) = \{p\} \cup (P_U \setminus N(p))$ 
if $p \in P$, and $P \setminus \Nc(p) = P_U \setminus N(p)$ if $p \notin P$ (Lines~\ref{line:bbk:pruneV} and~\ref{line:bbk:Qp}).
Note in particular that the pivot prunes all vertices (except itself) on its own side, in addition to removing vertices from $N(p)$ in the other side, which is more efficient than only pruning the vertices of $N(p)$.

As with the maximal clique enumeration in non-bipartite graphs, the pivot is chosen at Line~\ref{line:bbk:pivot} to maximize the number of vertices pruned,
\ie to minimize the size of $Q$.

\section{Complexity of BBK algorithm}
\label{sec:complexity}

In this section, we express the complexity of Algorithm \bbk as a function of its input and output characteristics (resp. Section~\ref{subsec:complexity_input} and Section~\ref{subsec:complexity_output}).
To do so, we use that Algorithm \bbk has been inspired by Eppstein \etal. work on maximal clique enumeration~\cite{eppstein2010listing}.
Thus, we can derive its complexity as a function of its input following an approach similar to theirs,
and the complexity as a function of its output following later works~\cite{conte2022overall,baudin2023faster}.

In what follows, we denote by $d_U$ the maximum degree of a vertex of $U$, and by $d_V$ the maximum degree of a vertex of~$V$.

\subsection{Complexity as a function of input characteristics}

\label{subsec:complexity_input}

Before expressing the complexity as a function of the input in Theorem~\ref{thm:bbk1}, let us first introduce the following preliminary Lemma~\ref{lemma:bbk}.

\begin{lemma}
  The bidegeneracy of $U$ is larger than the maximum degrees of $U$ and $V$:
  $$\du \leq \bu \text{ and } \dv \leq \bu.$$
  \label{lemma:bbk}
\end{lemma}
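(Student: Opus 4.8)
The plan is to bound the bidegeneracy of $U$ from below in each case by directly exhibiting, for a suitably chosen vertex of $U$, a witness set $U'$ that realizes the required value in the definition of the bidegeneracy of a vertex. Recall that $b(u) \ge d$ as soon as there exists $U' \subseteq U$ with $u \in U'$ and $|\Np(x) \cap (U' \cup V)| \ge d$ for every $x \in U'$; since $\bu$ is the maximum of $b(u)$ over $u \in U$, any such lower bound on a single $b(u)$ transfers to $\bu$. Both inequalities thus reduce to producing one good witness set.

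For $\du \le \bu$, I would take a vertex $u^\star \in U$ of maximum degree $\du$ and use the singleton witness $U' = \{u^\star\}$. Since $G$ is bipartite, $N(u^\star) \subseteq V$, and by definition $N(u^\star) \subseteq \Np(u^\star)$, so $N(u^\star) \subseteq \Np(u^\star) \cap (U' \cup V)$. This already gives $|\Np(u^\star) \cap (U' \cup V)| \ge |N(u^\star)| = \du$, hence $b(u^\star) \ge \du$ and therefore $\bu \ge \du$.

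For $\dv \le \bu$, the witness set is less immediate, and this is the step that carries the real content. I would pick a vertex $v^\star \in V$ of maximum degree $\dv$ and take as witness the set of its neighbors $U' = N(v^\star) \subseteq U$, which contains $\dv$ vertices of $U$ (the case $\dv = 0$ being trivial as bidegeneracies are nonnegative). The key observation is that any two distinct vertices $x, y \in N(v^\star)$ share the common neighbor $v^\star$, so $y \in N_2(x)$; consequently $N(v^\star) \setminus \{x\} \subseteq N_2(x) \subseteq \Np(x)$, and moreover $v^\star \in N(x) \subseteq \Np(x)$. Thus for every $x \in U'$ the set $\Np(x)$ contains the $\dv - 1$ vertices of $N(v^\star) \setminus \{x\}$ (all in $U' \subseteq U' \cup V$) together with $v^\star \in V$, giving $|\Np(x) \cap (U' \cup V)| \ge \dv$. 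Taking any $x \in U'$ as the distinguished vertex then yields $b(x) \ge \dv$, and hence $\bu \ge \dv$.

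The only subtlety to check carefully is the counting in the second part: I must confirm that the $\dv - 1$ vertices of $N(v^\star) \setminus \{x\}$ and the single vertex $v^\star$ are genuinely distinct and all lie in $\Np(x) \cap (U' \cup V)$, which holds because $N(v^\star) \subseteq U$ and $v^\star \in V$ lie in disjoint parts of the bipartition. Everything else reduces to unfolding the definitions of $\Np$ and of the bidegeneracy of a vertex, so I expect no further obstacle.
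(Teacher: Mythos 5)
Your proof is correct and follows essentially the same route as the paper: the singleton witness $U' = \{u^\star\}$ for $\du \leq \bu$, and the witness $U' = N(v^\star)$ for $\dv \leq \bu$. You simply spell out in more detail the counting that the paper leaves implicit (the $\dv - 1$ co-neighbors in $N_2(x)$ plus $v^\star$ itself in $N(x)$), which is a welcome clarification but not a different argument.
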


\begin{proof}
  Let $u \in U$ be a vertex of degree $d(u) = \du$, and consider $U' = \{u\}$, then $\forall x \in U', \ |\Np(x) \cap (U' \cup V)| = \du$. 
  Thus, $b(u) \geq \du$, hence $\bu \geq \du$.

  Now let $v \in V$ be a vertex of degree $d(v) = \dv$, and consider $U' = N(v)$, then since each vertex of $N(v)$ is adjacent to $v$, we deduce that $\forall x \in U', \ |\Np(x) \cap (U' \cup V)| \geq \dv$.
  Thus, for any $u \in U'$, $b(u) \geq \dv$ hence $\bu \geq \dv$.
\end{proof}

\begin{theorem}
  \label{thm:bbk1}
  The complexity of Algorithm~\ref{algo:bbk} \bbk is in~$\mathcal{O}\left((\nu + m) \cdot \bu \cdot 3^{\bu/3}\right)$, where $\nu$ is the number of vertices in $U$, $m$ is the number of edges and $\bu$ is the bidegeneracy of $U$.
\end{theorem}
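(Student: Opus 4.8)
The plan is to transpose the analysis of Eppstein~\etal.~\cite{eppstein2010listing} for degeneracy-ordered Bron--Kerbosch to our setting, using the bidegeneracy $\bu$ in place of the degeneracy and the bipartite neighborhood $N$ in place of $\Nc$. The starting point is the fact, already established in Section~\ref{sec:algorithm}, that Algorithm~\ref{algo:bbk} performs exactly the same branching as Algorithm~\ref{algo:bkmbe-etendu}, which is Eppstein~\etal.'s clique-enumeration algorithm run on the clique-extended graph $\Gc$: the pivot pruning and the updates of $P$ and $X$ in \bbk were shown to compute the same sets as $P \cap \Nc(x)$, $X \cap \Nc(x)$, and $P \setminus \Nc(p)$. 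Consequently the recursion forest of \bbk is isomorphic to that of the clique-enumeration algorithm, so I can import its combinatorial branching bound verbatim and only need to re-derive the per-node cost, which is where the gain from using $N$ appears.

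First I would bound the shape of the forest. By the bidegeneracy order, each initial candidate set satisfies $|P_i| = |\Np(u_i) \setminus \{u_1,\dots,u_{i-1}\}| \le b(u_i) \le \bu$, as argued just before Lemma~\ref{lemma:bbk}. The pivot is chosen at Line~\ref{line:bbk:pivot} to minimize the number of branches $|Q|$; the standard worst-case branching analysis of pivoted Bron--Kerbosch~\cite{eppstein2010listing} then shows that a call with candidate set of size $p$ spawns a recursion subtree with at most $3^{p/3}$ leaves. Hence each of the $\nu$ subproblems rooted at a vertex $u_i$ (Lines~\ref{line:bbk:forU}--\ref{line:bbk:initBK}) produces at most $3^{\bu/3}$ maximal bicliques and has $\mathcal{O}(\bu \, 3^{\bu/3})$ nodes, bounding single-child chains by the recursion depth, which is at most $\bu$.

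Next I would bound the total work. The initialization, including computing the projection-extended neighborhoods $\Np(u_i)$, building $P_i$ and $X_i$, and the bidegeneracy order, is performed by a two-hop exploration whose cost I would charge to edges; using $\dv \le \bu$ from Lemma~\ref{lemma:bbk} this is $\mathcal{O}((\nu + m)\,\bu)$. Inside the forest, the dominant operations at a node are the pivot selection and the computation of the children's sets $P_x, X_x$ (Lines~\ref{line:bbk:forxQ}--\ref{line:bbk:updateX}); since every candidate set has size $\mathcal{O}(\bu)$ and every $|N(x)| \le \bu$ by Lemma~\ref{lemma:bbk}, each such set operation costs $\mathcal{O}(\bu)$. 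Amortizing these costs across the nodes of the $\nu$ trees, and charging the neighborhood scans against the $m$ edges so that the per-vertex degree sums collapse to $m$, is what I expect to yield the claimed $\mathcal{O}((\nu + m)\,\bu\,3^{\bu/3})$.

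The main obstacle is precisely this last amortized accounting: a naive product of node count and worst-case per-node cost gives only a looser bound of the form $\nu \cdot \mathrm{poly}(\bu) \cdot 3^{\bu/3}$, and recovering the exact factor $(\nu + m)\,\bu$ requires charging the neighborhood-intersection work to the edges incident to the branching vertices so that the summed degree terms telescope to $m$ rather than to $\nu\,\bu$. I would also have to verify carefully that replacing $\Nc$ by $N$ changes only the realization of each step and not the branching recurrence, so that the $3^{\bu/3}$ leaf bound of the clique-extended execution still applies, and that the early maximality test (Lines~\ref{line:bbk:ifPUPVempty}--\ref{line:bbk:stopenum}) only prunes subtrees and therefore cannot increase any of these counts.
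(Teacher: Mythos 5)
Your skeleton matches the paper's strategy — compute the bidegeneracy order, bound $|P_i| \le b(u_i) \le \bu$, and import the $3^{|P_i|/3}$ branching bound from Eppstein~\etal — but the step you yourself flag as ``the main obstacle'' is exactly the step the proof needs, and you have not closed it. The missing quantitative ingredient is the excluded set $X$. At the root call for $u_i$, $X_i = \Np(u_i) \cap \{u_1,\dots,u_{i-1}\}$ is \emph{not} bounded by $\bu$: it can have size up to $|\Np(u_i)| = \mathcal{O}(\du \cdot \dv)$, and the pivot is chosen in $P \cup X$, so your claim that ``every candidate set has size $\mathcal{O}(\bu)$'' and that each per-node set operation costs $\mathcal{O}(\bu)$ does not cover the work attributable to $X$. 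The paper handles this by citing Eppstein~\etal's Lemma~3.6 wholesale: a root call with $|P_i| \le c$ costs $\mathcal{O}\left((c + |X_i|)\cdot 3^{c/3}\right)$, so the loop at Line~\ref{line:bbk:forU} costs $\mathcal{O}\left(\left(\nu\cdot\bu + \sum_{i=1}^{\nu}|X_i|\right)\cdot 3^{\bu/3}\right)$. The factor $m\cdot\bu$ in the statement then comes from the global bound $\sum_{i=1}^{\nu}|X_i| \le \sum_{u\in U}|\Np(u)| \le \sum_{u\in U}|N(u)| + \sum_{u\in U}|N_2(u)| \le m + \dv\cdot m = \mathcal{O}(\bu\cdot m)$ using Lemma~\ref{lemma:bbk} — not, as you suggest, from amortizing interior-node neighborhood scans against edges so that degree sums ``telescope to $m$.'' Without this bound on $\sum_i |X_i|$ (or an equivalent), your argument only yields something like $\nu\cdot\du\cdot\dv\cdot 3^{\bu/3}$ for the $X$-dependent work, which is not the claimed complexity.

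Two smaller points. First, your preliminary claim that the recursion forest of \bbk is isomorphic to that of Algorithm~\ref{algo:bkmbe-etendu} is broadly right in spirit (the paper establishes the set identities you cite), but the early termination at Lines~\ref{line:bbk:ifPUPVempty}--\ref{line:bbk:stopenum} and the restriction of the outer loop to $U$ mean the forests are not literally isomorphic; you correctly note these only prune subtrees, and that observation suffices, but it should be stated as an inequality rather than an isomorphism. Second, the preprocessing cost is $\mathcal{O}(\nu\cdot\du\cdot\dv) = \mathcal{O}(\nu\cdot\bu^2)$ by Lemma~\ref{lemma:bbk}, and the paper absorbs it via $\bu^2 \le \bu\cdot 3^{\bu/3}$; your $\mathcal{O}((\nu+m)\bu)$ claim for initialization is plausible but would also need the two-hop exploration argument spelled out.
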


\begin{proof}  
  The algorithm first computes a bidegeneracy order. 
  This requires, for each vertex $u \in U$, to calculate its projection-extended neighborhood $\Np(u)$ in~$\mathcal{O}\left(\du \cdot \dv\right)$. 
  Then, the bidegeneracy order is computed by iteratively taking a vertex $u$ of minimal $|\Np(u)|$ and decreasing by $1$ the value of $|\Np(u')|$ for each $u' \in \Np(u) \cap U$, and there are at most $\du \cdot \dv$ such nodes. 
  The whole procedure is thus carried out in~$\mathcal{O}\left(\nu \cdot \du \cdot \dv\right)$, that is in $\mathcal{O}\left(\nu \cdot \bu^2\right)$ according to Lemma~\ref{lemma:bbk}.

  Then, for each vertex $u_i \in U$ in the resulting order, the algorithm enumerates the maximal bicliques containing $u_i$ and no vertex preceding it in the order, using the \BKMBE function at Line~\ref{line:bbk:initBK}. 
  It begins with computing the sets $P_i = \Np(u_i) \setminus \{u_1,\dots,u_{i-1}\}$ and $X_i = \Np(u_i) \cap \{u_1,\dots,u_{i-1}\}$ at Lines~\ref{line:bbk:Pui} and~\ref{line:bbk:Xui}, in $\mathcal{O}\left(\du \cdot \dv\right)$. 
  Then, to evaluate the cost of the call to \BKMBE at Line~\ref{line:bbk:initBK}, we can here apply the complexity expression of the Eppstein~\etal. algorithm~\cite{eppstein2010listing} in the case of a non-bipartite graph, as it can be noticed that the cost of the operations performed are the same.
  To do so, we can use their Lemma~3.6 where the authors show that given $c \geq |P_i|$, this call is done in~$\mathcal{O}\left((c + |X_i|) \cdot 3^{c/3}\right)$, when choosing a pivot that maximizes the number of cut vertices.
  So here the complexity of the call is in $\mathcal{O}\left((b(u_i) + |X_i|) \cdot 3^{b(u_i)/3}\right)$, as by  definition of a bidegeneracy order, we know that the size of the candidate set $P_i$ is at most $b(u_i)$ (see Section~\ref{subsubsec:bidegen}). 
  Therefore, the loop at Line~\ref{line:bbk:forU} is done in~$\mathcal{O}\left(\underset{i=1}{\overset{\nu}{\sum}} (b(u_i) + |X_i|) \cdot 3^{|P_i|/3}\right)$, that is in~$\mathcal{O}\left(\left( \nu \cdot \bu + \underset{i=1}{\overset{\nu}{\sum}} |X_i| \right) \cdot 3^{\bu/3}\right)$.
  Now,  $\underset{i=1}{\overset{\nu}{\sum}} |X_i| \leq \underset{u \in U}{\sum} |\Np(u)| \leq \underset{u \in U}{\sum} |N(u)| + \underset{u \in U}{\sum} |N_2(u)|$ (see Definition~\ref{def:Np} that defines $\Np$).
  Besides, $\underset{u \in U}{\sum} |N(u)| = m$ and $\underset{u \in U}{\sum} |N_2(u)| \leq \underset{u \in U}{\sum} \underset{v \in N(u)}{\sum} d(v) \leq \underset{u \in U}{\sum} (\dv \cdot |N(u)|) = \dv \cdot m$. Thus, $\underset{i=1}{\overset{\nu}{\sum}} |X_i|$ is in~$\mathcal{O}\left(\dv \cdot m\right)$, that is in~$\mathcal{O}\left(\bu \cdot m\right)$ according to Lemma~\ref{lemma:bbk}. Then, the loop at Line~\ref{line:bbk:forU} is done in~$\mathcal{O}\left((\nu + m) \cdot \bu \cdot 3^{\bu/3}\right)$.

  Finally, the \bbk algorithm runs in~$\mathcal{O}\left((\nu + m) \cdot \bu \cdot 3^{\bu/3} + \nu \cdot \bu^2\right)$. 
  In addition, for any integer $k \geq 0$, $k^2 \leq k \cdot 3^{k/3}$, so in particular $\bu^2 \leq \bu \cdot 3^{\bu/3}$ which leads to the complexity expression in the statement.
\end{proof}

This complexity should be compared with that of Chen~\etal.'s \oombea algorithm~\cite{chen2022efficient}. 
They show that it runs in~$\mathcal{O}\left(m \cdot \nu \cdot \zu \cdot  2^{\zu}\right)$, where $\zu$, called \emph{unilateral coreness}, is the degeneracy of the graph projected onto $U$, and $m$ is the number of edges in $G$.
The bidegeneracy and the unilateral coreness are closely related concepts, with $\bu \geq \zu$.
The factor $(\nu+m) \cdot \bu \cdot 3^{\bu/3}$ in our complexity therefore corresponds to
the factor $m \cdot \zu \cdot 2^\zu$ in the \oombea algorithm complexity, whereas this second complexity features an additional factor $\nu$.

Several practical observations can be made about the complexity expression of
Theorem~\ref{thm:bbk1}.
First, the bidegeneracy of $U$ plays a central role in the complexity, due to the exponential factor in $\bu$, which also points out the benefit of using a bidegeneracy order.
Indeed, this order makes it possible to bound by $\bu$ the maximum size of a candidate set $P_i = \Np(u_i) \setminus \{u_1,\dots,u_{i-1}\}$, while a random order would lead to bound it by $ \du+\ddu $, where $\ddu$ is the maximum degree of the graph projected onto $U$.
Furthermore, we show in Section~\ref{sec:experiments} (see Table~\ref{tab:dmax}) that the maximal bidegeneracy is in practice close to its optimal value, \ie the maximal degree of the graph (Lemma~\ref{lemma:bbk}). 
In other words, without the bidegeneracy order, there would be an additional $\ddu$ in the bound, and this would be of less benefit as the projected degree is much larger in practice than the bidegeneracy.

It should be also noted that this complexity in $\mathcal{O}\left(\nu \cdot \bu \cdot 3^{\bu/3}\right)$ is more precisely in $\mathcal{O}\left(\underset{i=1}{\overset{\nu}{\sum}} (b(u_i) + |X_i|) \cdot 3^{b(u_i)/3}\right)$, as seen in the proof of Theorem~\ref{thm:bbk1}.
It is a relevant point, as $b(u_i)$ of most vertices $u_i \in U$ is well below the maximum value $\bu$ in massive real-world networks,
as can be seen in Table~\ref{tab:dmax}: the average bidegeneracy in real-world datasets is usually much lower than the maximum bidegeneracy.
However, the exponential factor $3^{\bu/3}$ in the complexity formula can be high on real data, even in cases where \bbk  exhibits good performance in practice (see Section~\ref{sec:experiments}).
That is because this worst case complexity can be far from the actual running time of the algorithm, which is why we develop in the following subsection a complexity expression as a function of the algorithm output characteristics.

\subsection{Complexity as a function of output characteristics}
\label{subsec:complexity_output}

Now, we formulate the complexity of Algorithm~\ref{algo:bbk} \bbk as a function of its output characteristics, precisely:

\begin{itemize}
\item $\boldsymbol{\B}$: the number of maximal bicliques in $G$;
\item $\boldsymbol{q}$: the maximal size of a non-trivial biclique (meaning that $C_U \neq \emptyset$ and $C_V \neq \emptyset$);
\item $\boldsymbol{d}$: the maximal degree in $G$;
\item $\boldsymbol{\du}$: the maximal degree of vertices in $U$, \ie $\du = max_{u \in U} \{ |N(u)| \}$;
\item $\boldsymbol{\ddu}$: the maximal degree of the graph projected onto $U$, \ie  $\ddu = max_{u \in U} \{ |N_2(u)| \}$
\end{itemize}

To do this, we consider the trees of recursive calls made by function~\BKMBE within Algorithm~\ref{algo:bbk}. 
The initializing call of this function is made at Line~\ref{line:bbk:initBK}, and recursive calls are made at Line~\ref{line:bbk:RC}.
The internal nodes of these trees correspond to calls for which the set of vertices
$Q$ on which iterates the loop on Line~\ref{line:bbk:forxQ} is not empty, \ie they generate other child calls, while the leaves correspond to calls that generate no other.

Inspired by the work of Conte \etal.~\cite{conte2022overall}, we focus in what follows on the leaves of these call trees, which we separate into two categories: those that output a maximal clique and those that do not.
The latter correspond to unnecessary computations, as they do not contribute to the enumeration. An optimal pivot pruning strategy would cut the branches that lead to these leaves, leaving only leaves that return a maximal biclique. Let us note $\l$ the total number of leaves in the call trees. 
Some of these leaves return maximal bicliques (counted in $\B$), and others do not.
We are then interested in the ratio of ``good'' leaves:

$$\boldsymbol{r = \dfrac{\B}{\l}}.$$

In particular, if $r$ is less than 1, it means that there are unnecessary recursive calls.
Using this ratio, we establish Theorem~\ref{thm:bbk2} to express the complexity of Algorithm~\ref{algo:bbk} as a function of its output.

\begin{theorem}
  \label{thm:bbk2}
  Using the above definition, we have $1 \leq \frac{1}{r} \leq 2^q$, and the complexity of Algorithm~\ref{algo:bbk} is in~$\mathcal{O}\left(\frac{1}{r} \cdot (\du + \ddu) \cdot d \cdot q \cdot \B\right)$.
\end{theorem}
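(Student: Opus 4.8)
The plan is to reason entirely on the forest of recursion trees generated by the calls to \BKMBE, exploiting two facts about Bron--Kerbosch style recursion: that every set $R$ reached is a biclique extendable to a maximal one, and that the sets $R$ attached to distinct tree nodes are pairwise distinct (the set $X$ of already-processed vertices forbids rebuilding a given vertex set along two different orders). I would first dispose of the double inequality $1 \le 1/r \le 2^q$. The lower bound is immediate: each maximal biclique is output exactly once, at a leaf, so the number of ``good'' leaves equals $\B$, and since good leaves form a subset of all leaves we get $\B \le \ell$, i.e.\ $r \le 1$. For the upper bound I would assign to every leaf one maximal biclique containing its set $R$; since all leaf-sets $R$ are distinct and every $R$ assigned to a fixed non-trivial maximal biclique $C$ is one of its $\le 2^{|C|} \le 2^q$ subsets, each $C$ receives at most $2^q$ leaves, whence $\ell \le 2^q \cdot \B$ and $1/r \le 2^q$. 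The care here is to route the leaves whose $R$ sits only in $U$ (or only in $V$) correctly: these correspond to the trivial bicliques $U$ and $V$ treated outside the recursion, and must be argued to terminate quickly under the maximality tests of Lines~\ref{line:bbk:ifPUPVempty}--\ref{line:bbk:stopenum}.

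For the running time I would bound the total cost by (number of recursion nodes) $\times$ (cost per node). The key observation for the per-node cost is a \emph{pool bound}: every set $P_x$ and $X_x$ produced from Line~\ref{line:bbk:XxU} onwards is obtained from the current $P,X$ by taking subsets and intersections with neighborhoods, and no vertex outside the initial $P_i \cup X_i = \Np(u_i)$ is ever introduced. Hence $|P \cup X| \le |\Np(u_i)| \le \du + \ddu$ at every node of the tree rooted at $u_i$. Given this, the pivot selection of Line~\ref{line:bbk:pivot} can be carried out in $\mathcal{O}((\du+\ddu)\cdot d)$ by scanning, for each candidate on one side, the neighborhoods of the $\le \du+\ddu$ vertices of the opposite candidate set (each of size $\le d$) to count how many are pruned; the computation of the children's sets $P_x,X_x$ costs $\mathcal{O}(\du+\ddu+d)$ per child by scanning $N(x)$. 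Both are $\mathcal{O}((\du+\ddu)\cdot d)$ per node.

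It then remains to count the nodes. Each recursive call appends one vertex to $R$, so the depth of the forest equals the maximum size of a set $R$ reached at an internal node; since such an $R$ is a sub-biclique of a non-trivial maximal biclique, its size is at most $q$, bounding the depth by $q$. Charging each node to an arbitrary leaf of its subtree, the nodes charged to a fixed leaf are its ancestors, of which there are at most $q+1$; therefore the total number of nodes is $\mathcal{O}(q\,\ell)$. Multiplying by the per-node cost and substituting $\ell = \B/r$ yields $\mathcal{O}\!\left(\frac{1}{r}\,(\du+\ddu)\,d\,q\,\B\right)$, with the one-off initialization cost from the proof of Theorem~\ref{thm:bbk1} absorbed into this bound.

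The main obstacle I anticipate is not the per-node accounting, which follows cleanly from the pool bound, but the two structural claims that carry the factors $2^q$ and $q$: namely that each maximal biclique captures at most $2^q$ leaves and that the recursion depth is at most $q$. Both hinge on a precise description of which sets $R$ can appear at internal nodes versus non-output leaves, and in particular on showing that configurations with $R$ lying entirely in $U$ or entirely in $V$ are halted by the early maximality tests rather than generating long unary chains; making this rigorous, together with the distinctness of the leaf-sets, is where the real work lies.
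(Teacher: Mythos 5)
Your proposal is correct and follows essentially the same route as the paper's proof: the lower bound on $1/r$ from the definition, the upper bound by pairing distinct node-sets $R$ with the at most $2^q$ sub-bicliques of each maximal biclique, the $\mathcal{O}((\du + \ddu)\cdot d)$ per-node cost from bounding $|P \cup X|$ by the projection-extended neighborhood size, and the $\mathcal{O}(q \cdot \ell)$ node count from the depth bound. The additional care you flag about leaves with $R$ entirely in $U$ or $V$ is a point the paper's proof does not dwell on, but it does not change the argument.
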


\begin{proof}
  First, we show that $1 \leq \frac{1}{r} \leq 2^q$. 
  On the one hand, it is clear that $\frac{1}{r} \geq 1$ by definition of $r$. 
  On the other hand, each maximal biclique of $G$ contains at most $2^q$ sub-bicliques, so there are at most $2^q \cdot \B$ bicliques in total in the graph. 
  Now, observe that the set $R$ associated to a node of the call trees is a biclique,
  and the root call to the \BKMBE function at iteration $i$ of the loop at Line~\ref{line:bbk:forU} enumerates all bicliques $R$ that contain $x_i$ and none of the vertices $x_1,\dots,x_{i-1}$.
  Consequently, the root call of each iteration enumerates a set of bicliques $R$
  that is disjoint from the sets of bicliques resulting from other iterations.
  The same applies to the recursive calls made in the loop at Line~\ref{line:bbk:forxQ}: as each vertex processed in an iteration is placed in $X$, no biclique $R$ in subsequent iterations can contain that vertex.
  So, each node in the call trees of \BKMBE is associated to a  biclique different from any other node in any other call tree, so that there are at most $2^q \cdot \B$ nodes in the trees.
  Thus, as each leaf is a particular node of a tree, we deduce that $\l \leq 2^q \cdot \B$, and therefore $\frac{1}{r} \leq 2^q$.

  Now, we express the complexity of the \bbk algorithm. 
  By definition of $q$, we know that the depth of any call tree is at most $q$, thus there are at most $q \cdot \l$ nodes in the call trees. 
  Besides, the number of vertices in the set $P \cup X$ that can augment the current biclique is in~$\mathcal{O}\left(\du + \ddu\right)$.
  The pivot is therefore chosen in $\mathcal{O}\left((\du + \ddu) \cdot d\right)$, by calculating the size of $P \cap N(p)$ for each $p \in P \cup X$. 
  Furthermore, the intersections of the sets $P_U$, $P_V$, $X_U$ and $X_V$ with $N(x)$ within the loop starting at Line~\ref{line:bbk:forxQ} are done in $\mathcal{O}\left(d\right)$, and this loop iterates over at most $|P|$ vertices, so it runs in $\mathcal{O}\left((\du + \ddu) \cdot d\right)$ too.

  So, the complexity of Algorithm~\ref{algo:bbk} is in $\mathcal{O}\left((\du + \ddu) \cdot d \cdot q \cdot \l\right)$. 
  As $\l = \frac{1}{r} \cdot \B$, this complexity can be expressed as~$\mathcal{O}\left(\frac{1}{r} \cdot (\du + \ddu) \cdot d \cdot q \cdot  \B\right)$. 
\end{proof}

This expression of the complexity as a function of output characteristics gives an insight on how close \bbk is from an optimal enumeration.
Indeed, to enumerate the maximal bicliques, we need at least to write each of them into the output, which is achieved in $\mathcal{O}\left(q \cdot \B\right)$, our algorithm is therefore a factor $\frac{1}{r} \cdot (\du + \ddu) \cdot d$ away from this value. 
We will evaluate in Section~\ref{sec:experiments} typical values of the factor $\frac{1}{r}$ on real data and see that it is close to 1 in general.

Note that Chen~\etal. also give an expression of the complexity of \oombea as a function of its output~\cite{chen2022efficient}: it runs in~$\mathcal{O}\left(\zu \cdot m \cdot \B\right)$, so comparing these two complexities leads to comparing the factors $\frac{1}{r} \cdot (\du + \ddu) \cdot d \cdot q$ to $\zu \cdot m$, which is not trivial, so we perform an extensive experimental comparison of the running time of both algorithms in the next section.

\section{Experiments}
\label{sec:experiments}

In this section, we perform experiments on the \bbk algorithm to demonstrate its practical efficiency. 
We have implemented this algorithm in \cpp, and the code is available online\,\footnote{\url{https://gitlab.lip6.fr/baudin/bbk}}. 
Throughout this section, the bipartite graphs $G=(U,V,E)$ used are such that the set $U$ is the one containing the \emph{fewest} vertices and the set $V$ is the one containing the \emph{most}.
Unless specified otherwise, we initialize the algorithm on the set $U$.

We first present the bipartite graphs that are used in these experiments, and compare the execution times of \bbk algorithm on these graphs with those of \oombea algorithm~\cite{chen2022efficient}. 
Then, we discuss the influence of the choice of the set on which we initialize the enumeration on the vertices (Line~\ref{line:bbk:forU} of Algorithm~\ref{algo:bbk}).
Finally, we show that although our implementation leads to shorter execution times, the \oombea algorithm is usually more economical in terms of memory consumption.

\subsection{Datasets}

We tested the \bbk algorithm on a set of bipartite graphs retrieved from the KONECT~\cite{data-dnc} database. 
We selected bipartite graphs from different real-life situations, corresponding to various numbers of vertices and edges, in order to test the algorithm in different scenarios.

\paragraph{Type of data.}
The bipartite graphs that we use are presented in Table~\ref{tab:data},
together with  some of their relevant properties.
Some of them concern users actions on online platforms: tags posted in \bibsonomy and \citeulike, books rated on \bookcrossing, movies rated on \movie, \dvdciao, \filmtrust and \wikilens, posts made on forums in \ucforum. 
Other graphs link people to their activity: \actormovie is a graph linking actors to the movies that they have starred in, \citeseer links scientific authors to their publications, \github links users to the projects they are working on. 
Finally, the remaining graphs correspond to various types of information classification: \dailykos, \reuters and \nips connect documents and the words that they contain, \dbpedia associates athletes to their teams, \tvtropes links artistic works to their style, \discogs links musical content to its style, \marvel links Marvel comics characters to the publications in which they appear, \pics connects people to the images on which they are tagged, and \youtube connects users to the groups to which they belong.

Graphs in Table~\ref{tab:data} are sorted by increasing number of maximal bicliques (column $\boldsymbol{\B}$). 
As mentioned above, the two sets of vertices $U$ and $V$ of these bipartite graphs $G = (U,V,E)$ are chosen in such a way that $U$ contains the fewest vertices ($\nu \leq \nv$).
\dvdciao is the graph containing the most bicliques that could be enumerated within a week of computation,
while no algorithm was able to terminate for \nips and \movie
within this computation time limit. 
It is worth noticing that there is no simple relation between the number of maximal bicliques and the number of edges or vertices.
For example, the bipartite graph \filmtrust has numbers of edges and vertices of the same order of magnitude as the graph \wikilens, but it contains more than $200$ times more maximal bicliques.

\begin{table}[!hbtp]
  \centering
  \resizebox{\linewidth}{!}
  {
    \begin{tabular}{|r|rrrr|l|}
      \hline
      \textbf{Graph} & $\boldsymbol{m}$ & $\boldsymbol{\nu}$ & $\boldsymbol{\nv}$ & $\boldsymbol{\B}$ & \textbf{Source: {\footnotesize{\url{http://konect.cc/networks/}}}} \\
      \hline
      \emph{UC-Forum} & \np{7089} & \np{522} & \np{899} & \np{16261} & {\footnotesize \tt opsahl-ucforum} \\
      \emph{Discogs} & \np{481661} & \np{15} & \np{270771} & \np{17650} & {\footnotesize \tt discogs\_lgenre} \\
      \emph{CiteSeer} & \np{512267} & \np{105353} & \np{181395} & \np{171354} & {\footnotesize \tt komarix-citeseer} \\
      \emph{Marvel} & \np{96662} & \np{6486} & \np{12942} & \np{206135} & {\footnotesize \tt marvel} \\
      \emph{DBpedia} & \np{1366466} & \np{34461} & \np{901130} & \np{517943} & {\footnotesize \tt dbpedia-team} \\
      \emph{Actor-Movie} & \np{1470404} & \np{127823} & \np{383640} & \np{1075444} & {\footnotesize \tt actor-movie} \\
      \emph{Pics} & \np{997840} & \np{17122} & \np{495402} & \np{1242718} & {\footnotesize \tt pics\_ui} \\
      \emph{YouTube} & \np{293360} & \np{30087} & \np{94238} & \np{1826587} & {\footnotesize \tt youtube-groupmemberships} \\
      \emph{WikiLens} & \np{26937} & \np{326} & \np{5111} & \np{2769773} & {\footnotesize \tt wikilens-ratings} \\
      \emph{BookCrossing} & \np{1149739} & \np{105278} & \np{340523} & \np{54458953} & {\footnotesize \tt bookcrossing\_full-rating} \\
      \emph{GitHub} & \np{440237} & \np{56519} & \np{120867} & \np{55346398} & {\footnotesize \tt github} \\
      \emph{DailyKos} & \np{353160} & \np{3430} & \np{6906} & \np{242384960} & {\footnotesize \tt bag-kos} \\
      \emph{FilmTrust} & \np{35494} & \np{1508} & \np{2071} & \np{646318495} & {\footnotesize \tt librec-filmtrust-ratings} \\
      \emph{CiteULike} & \np{538761} & \np{22715} & \np{153277} & \np{2333281521} & {\footnotesize \tt citeulike-ut} \\
      \emph{Reuters} & \np{978446} & \np{19757} & \np{38677} & \np{10071287092} & {\footnotesize \tt gottron-reuters} \\
      \emph{BibSonomy} & \np{453987} & \np{5794} & \np{204673} & \np{10526275315} & {\footnotesize \tt bibsonomy-2ut} \\
      \emph{TV-Tropes} & \np{3232134} & \np{64415} & \np{87678} & \np{19636996096} & {\footnotesize \tt dbtropes-feature} \\
      \emph{DVD-Ciao} & \np{1625480} & \np{21019} & \np{71633} & \np{109769732096} & {\footnotesize \tt librec-ciaodvd-review\_ratings} \\
      \nips & \np{746316} & \np{1500} & \np{12375} & - & {\footnotesize \tt bag-nips} \\
      \movie & \np{1000009} & \np{3760} & \np{6040} & - & {\footnotesize \tt movielens-1m} \\
      \hline
    \end{tabular}
  }
  \caption{Datasets used in the experiments, sorted by increasing number of maximal bicliques. 
    $\boldsymbol{m}$ is the number of links in the bipartite graph,
    $\boldsymbol{\nu}$ the number of vertices in set $U$, $\boldsymbol{\nv}$ the number of vertices in its set $V$ (denominated such that $\boldsymbol{\nu} \leq \boldsymbol{\nv}$), and $\boldsymbol{\B}$ the number of maximal bicliques. 
    A ``-'' symbol means that we do not know the number of maximal bicliques, as no algorithm finishes in less than a week for these graphs.}
  \label{tab:data}
\end{table}

\paragraph{Bidegeneracies of graphs and vertices.}

Table~\ref{tab:dmax} presents the maximum degrees $ \boldsymbol{\du} $ and $ \boldsymbol{\dv} $ in the bipartite graphs of Table~\ref{tab:data}, as well as their bidegeneracies $ \boldsymbol{\bu} $ and $ \boldsymbol{\bv} $ defined in Section~\ref{sec:algorithm}. 
We also report the maximum degrees of the graph projected onto $U$ or $V$ ($\boldsymbol{\ddu}$ or $\boldsymbol{\ddv}$), and the average bidegeneracies $\boldsymbol{\overline{\bu}}$ and $\boldsymbol{\overline{\bv}}$ that are respectively the mean of the bidegeneracies of the vertices of $U$ and of $V$.
We remind that the complexities of the \bbk algorithm have been expressed in Theorem~\ref{thm:bbk1} with $\bu$ (or similarly $\bv$), thanks to the nodes processed in a bidegeneracy order.
With a random order, this factor would be bounded by $\ddu + \du$ (or $\ddv + \dv$).
We observe that, while $ \bu $ (and $ \bv$) $\geq max(\du,\dv) $, we almost always have $ \bu = max(\du,\dv) $.
Finally, $ \bu $ (and $ \bv $) are lower than $ \ddu + \du $ (or $ \ddv + \dv $) but typically of the same order of magnitude.
So, the bidegeneracies of the graphs, which appear in the complexity expressions, do not give a clear understanding of the computational gain that the bidegeneracy order brings.
Moreover, when considering the average values of the vertex bidegeneracies ($\boldsymbol{\overline{\bu}}$ and $\boldsymbol{\overline{\bv}}$), we can see that the average vertex  bidegeneracy is often significantly lower than the bidegeneracy of the graph.
As noticed in Section~\ref{subsec:complexity_input}, the complexity as a function of the input can be more precisely expressed as $\mathcal{O}\left(\underset{i=1}{\overset{\nu}{\sum}} \left( (b(u_i) + |X_i|) \cdot 3^{b(u_i)/3} \right)\right)$, where the index $i$ is given by a bidegeneracy order and $X_i$ is defined at Line~\ref{line:bbk:Xui} of Algorithm~\ref{algo:bbk}.
So the fact that vertex bidegeneracies are relatively lower gives a clue as to how well the algorithm works in practice on these graphs: the worst cases of bidegeneracy are not called up on many vertices.

\begin{table}[!hbtp]
  \centering
  \begin{tabular}{|r|rrrrrrrr|}
    \hline
    \textbf{Graph} & $\boldsymbol{\du}$ & $\boldsymbol{\dv}$ & $\boldsymbol{\ddu}$ & $\boldsymbol{\ddv}$ & $\boldsymbol{\bu}$ & $\boldsymbol{\bv}$& $\boldsymbol{\overline{\bu}}$ & $\boldsymbol{\overline{\bv}}$\\
    \hline
    \emph{UC-Forum} & \np{126} & \np{99} & \np{411} & \np{634} & \np{126} & \np{144} & \np{91} & \np{95}\\
    \emph{Discogs} & \np{128070} & \np{15} & \np{15} & \np{270771} & \np{128070} & \np{128070} & \np{32110} & \np{102088}\\
    \emph{CiteSeer} & \np{286} & \np{385} & \np{596} & \np{1653} & \np{385} & \np{385} & 9.3 & \np{43}\\
    \emph{Marvel} & \np{1625} & \np{111} & \np{1934} & \np{9855} & \np{1625} & \np{1625} & \np{40} & \np{769}\\
    \emph{DBpedia} & \np{2671} & \np{17} & \np{2839} & \np{18517} & \np{2671} & \np{2671} & \np{47} & \np{522}\\
    \emph{Actor-Movie} & \np{294} & \np{646} & \np{7799} & \np{3957} & \np{646} & \np{646} & \np{171} & \np{47}\\
    \emph{Pics} & \np{7810} & \np{335} & \np{7079} & \np{113079} & \np{7810} & \np{7810} & \np{173} & \np{1459}\\
    \emph{YouTube} & \np{7591} & \np{1035} & \np{7357} & \np{37514} & \np{7591} & \np{7591} & \np{124} & \np{1049}\\
    \emph{WikiLens} & \np{1721} & \np{80} & \np{285} & \np{4826} & \np{1721} & \np{1721} & \np{138} & \np{1062}\\
    \emph{BookCrossing} & \np{13601} & \np{2502} & \np{53916} & \np{151646} & \np{13601} & \np{13601} & \np{215} & \np{1669}\\
    \emph{GitHub} & \np{884} & \np{3675} & \np{15995} & \np{29650} & \np{3675} & \np{3675} & \np{525} & \np{103}\\
    \emph{DailyKos} & \np{457} & \np{2123} & \np{430} & \np{6895} & \np{2817} & \np{2123} & \np{2808} & \np{1349}\\
    \emph{FilmTrust} & \np{244} & \np{1044} & \np{1459} & \np{1770} & \np{1100} & \np{1044} & \np{1020} & \np{152}\\
    \emph{CiteULike} & \np{4072} & \np{8814} & \np{18190} & \np{80410} & \np{8814} & \np{8814} & \np{3903} & \np{915}\\
    \emph{Reuters} & \np{380} & \np{19044} & \np{19731} & \np{37716} & \np{19044} & \np{19044} & \np{18632} & \np{287}\\
    \emph{BibSonomy} & \np{21463} & \np{1407} & \np{4614} & \np{159465} & \np{21463} & \np{21463} & \np{584} & \np{6919}\\
    \emph{TV-Tropes} & \np{6507} & \np{12400} & \np{47460} & \np{37494} & \np{12400} & \np{12400} & \np{6487} & \np{2331}\\
    \emph{DVD-Ciao} & \np{34884} & \np{422} & \np{13000} & \np{62027} & \np{34884} & \np{34884} & \np{241} & \np{24195}\\
    \nips & \np{914} & \np{1455} & \np{1500} & \np{12363} & \np{1760} & \np{3312} & \np{1755} & \np{2671}\\
    \movie & \np{3428} & \np{2314} & \np{3660} & \np{6040} & \np{3429} & \np{4998} & \np{2492} & \np{4978}\\
    \hline
  \end{tabular}
  \caption{Characteristics of the bipartite graphs of Table~\ref{tab:data}. $\boldsymbol{\du}$ and $\boldsymbol{\dv}$ are the maximum degrees in $U$ and $V$,
    $\boldsymbol{\ddu}$ and $\boldsymbol{\ddv}$ are the maximum projected degrees on $U$ and $V$, $\boldsymbol{\bu}$ and $\boldsymbol{\bv}$ are the maximum bidegeneracies and $\boldsymbol{\overline{\bu}}$ and $\boldsymbol{\overline{\bv}}$ are the mean bidegeneracies of $U$ and $V$.}
  \label{tab:dmax}
\end{table}

\subsection{Results: computation time}

To evaluate the gain in efficiency in the enumeration of maximal bicliques, we measure the computation times of our implementation and compare them to the ones obtained with the Chen~\etal.~\cite{chen2022efficient} implementation\footnote{\url{https://github.com/S1mpleCod/cohesive_subgraph_bipartite}}. 
We carry out these experiments on machines equipped with 2 Intel Xeon E5645 processors with 12 cores each at $\np{2,4}$~GHz and 128~GB of RAM. 
We set a computation time limit of one week, so that computations that exceed this limit are interrupted.

Figure~\ref{fig:time-bkmbe} presents these computation times on a logarithmic scale and the corresponding numerical values are detailed in Table~\ref{tab:time-soa}
(values ``-'' in the table correspond to computations that did not finish within one week).
Graphs are sorted by increasing number of maximal bicliques. 
We can observe that the computation time generally increases with the number of bicliques, whereas it does not seem to be directly related to the number of vertices, edges, degree or bidegeneracy. 
In most cases, the computation time is lower for \bbk than that for \oombea, and the more bicliques the graph contains, the larger the difference, reaching more than a factor 10 for \filmtrust and \citeulike. 
Finally, for the graphs with the largest number of maximal bicliques, \oombea does not obtain all the maximal bicliques within the time limit of one week of computing.

\begin{figure}[!hbtp]
  \centering
  \includegraphics[width=0.9\linewidth]{./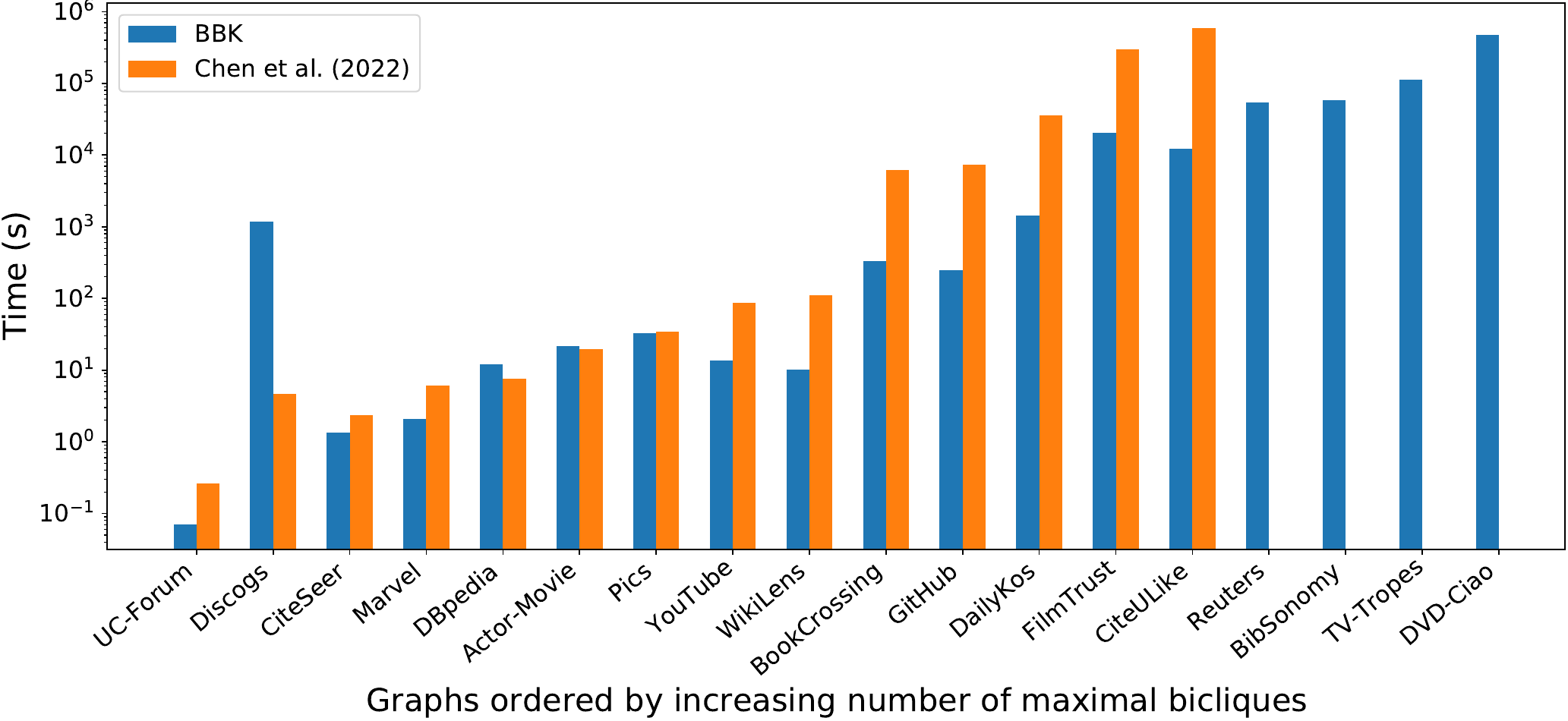}
  \caption{Computation times of the \bbk algorithm on the datasets of Table~\ref{tab:data} compared to those of the \oombea algorithm. 
    On the rightmost graphs, the values for \oombea are not displayed because the computation was not completed within the one week limit of the experiments.}
  \label{fig:time-bkmbe}
\end{figure}

However, there are some graphs (among those with the least maximal bicliques) for which \oombea is faster than \bbk: this is the case for \discogs, \dbpedia and \actormovie. 
\discogs is a unique case in our experiments, for which \oombea is much faster than our algorithm.
It stems from the fact that the structure of this graph is very particular as its set $U$ is small  ($ \nu = 15$),
but it has a high average bidegeneracy ($ \overline{\bu} > 10^4 $), while \bbk tends to be more efficient on graphs with lower bidegeneracy.
By contrast, the \emph{unilateral coreness} value $\zu$ present in Chen~\etal. complexity expression is bounded by the size $\nu$ of $U$, as it is based on the projection onto $U$, which makes \oombea significantly more efficient on this instance.

We also display in Table~\ref{tab:time-soa} the ratio $r$ defined in Section~\ref{subsec:complexity_output}.
We remind that this ratio corresponds to the fraction of leaves in the call tree that return a maximal biclique;
the other leaves are unnecessary for the computation, and would be pruned by an optimal pivot strategy.
We can see that this ratio is relatively close to 1 in our enumerations, which means that a better pivot could not be much more
efficient at pruning useless branches of the call trees. 
Note that three graphs stand out with a lower $r$ values: \dbpedia, \actormovie and \pics. 
Interestingly, these are the cases where \bbk performs relatively poorly by comparison with \oombea in terms of computation time (excluding \discogs discussed above).
This suggests that the reason for this poorer performance
is a lower efficiency of the pivot pruning on these graphs.

\begin{table}[!hbtp]
  \centering
  \setlength{\tabcolsep}{12pt}
  \begin{tabular}{|r|rr||r|}
    \hline
    \textbf{Graph} & $\boldsymbol{t_{\bbk}}$ & $\boldsymbol{t_{\oombea}}$ & $\boldsymbol{r}$\\
    \hline
    \emph{UC-Forum} & 0.07 & 0.26 & 0.60\\
    \emph{Discogs} & 1,185 & 4.6 & 1.00\\
    \emph{CiteSeer} & 1.4 & 2.3 & 0.71\\
    \emph{Marvel} & 2.1 & 6.1 & 0.66\\
    \emph{DBpedia} & 12 & 7.7 & 0.40\\
    \emph{Actor-Movie} & 21 & 19 & 0.21\\
    \emph{Pics} & 32 & 34 & 0.39\\
    \emph{YouTube} & 13 & 87 & 0.76\\
    \emph{WikiLens} & 10 & 111 & 0.91\\
    \emph{BookCrossing} & 335 & 6,169 & 0.77\\
    \emph{GitHub} & 248 & 7,283 & 0.84\\
    \emph{DailyKos} & 1,419 & 35,837 & 0.77\\
    \emph{FilmTrust} & 20,255 & 300,307 & 0.98\\
    \emph{CiteULike} & 12,338 & 594,549 & 0.92\\
    \emph{Reuters} & 54,045 & - & 0.82\\
    \emph{BibSonomy} & 58,719 & - & 0.96\\
    \emph{TV-Tropes} & 113,659 & - & 0.73\\
    \emph{DVD-Ciao} & 476,686 & - & 0.93\\
    \nips & - & - & -\\
    \movie & - & - & -\\
    \hline
  \end{tabular}
  \setlength{\tabcolsep}{6pt} 
  \caption{Computation times (in seconds) obtained by \bbk and \oombea algorithms. A ``-'' symbol means that the computation has not been completed within one week. The last column represents the ratio $r$ for \bbk defined in Section~\ref{subsec:complexity_output} and which appears in the expression of the complexity (Theorem~\ref{thm:bbk2}).}
  \label{tab:time-soa}
\end{table}

\subsection{Starting the enumeration from $U$ or $V$}
\label{subsec:UV}

In Algorithm~\ref{algo:bbk}, we can choose to run the loop on Line~\ref{line:bbk:forU} on the vertices of $U$ or on the vertices of $V$.
While the output of the algorithm is identical, we show here that this choice may have a significant impact on the computation time.

By default, the iterations are carried out on the set $U$ containing the lowest number of vertices, and we use this case as a reference.
Figure~\ref{fig:VU} shows the impact of iterating on the set containing the most vertices, which is the set $V$ in Table~\ref{tab:data}.
The bars represented in this figure correspond to the time  of a run on $V$ divided by the time by a run on $U$, so a bar above $y=1$ means that the run is slower when iterating on the larger set of vertices, and vice versa.

\begin{figure}[!hbtp]
  \centering

  \includegraphics[width=0.9\linewidth]{./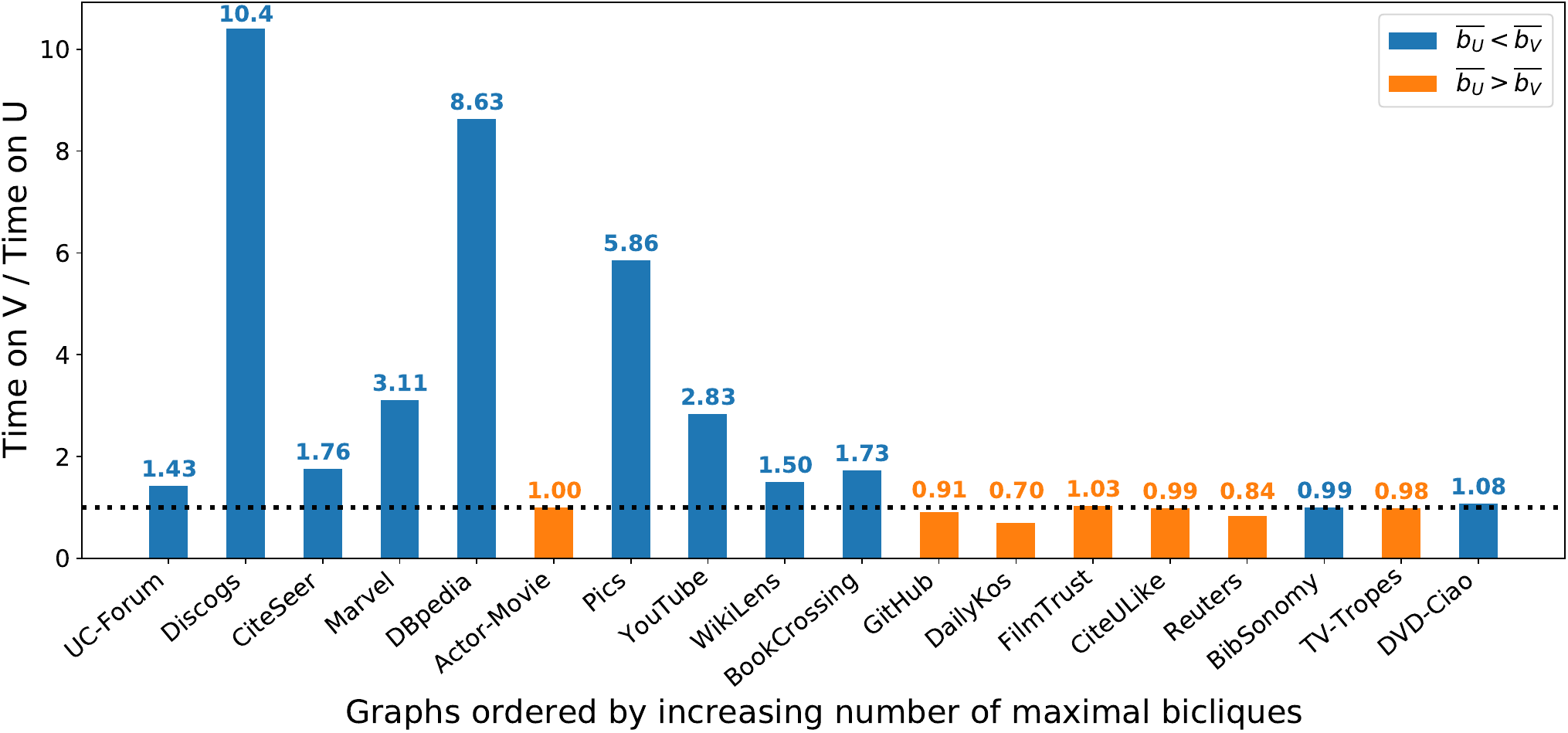}
  
  \caption{Ratio between the execution time of Algorithm~\ref{algo:bbk} \bbk when the run is performed on the larger set $V$ to the execution time when 
    the run is performed on the smaller set $U$. The results in blue correspond to graphs where $\overline{\bu} < \overline{\bv}$, and in orange to graphs where $\overline{\bu} > \overline{\bv}$.}
  \label{fig:VU}
\end{figure}

First, we notice that this choice can have a strong impact on the computation time, since for some graphs such as \discogs, \dbpedia or \pics, the computation time varies by a factor larger than $5$,
while it seems to be less significant for the graphs containing the most maximal bicliques, that have a factor closer to 1.
Then, we observe that choosing the smallest set $U$ is appropriate for two-thirds of the datasets, but there are several graphs where it is more efficient to perform the run on $V$. 
We have seen that if a vertex $u$ has a bidegeneracy $b(u)$, then the time spent in the call of function \BKMBE made at the iteration of the loop of Line~\ref{line:bbk:forU} corresponding to this vertex is exponential in $b(u)$ (see the proof of Theorem~\ref{thm:bbk1}).
Consequently, we investigate if there is a relation between the computation times in regard to $\overline{\bu}$ and $\overline{\bv}$ to understand the origin of these observations.

Thus, we distinguish between two cases: we color a bar in Figure~\ref{fig:VU} in orange when $\overline{\bu} > \overline{\bv}$ and in blue when $ \overline{\bv} > \overline{\bu} $. 
It appears that, almost in all cases, initializing on the set with the lowest mean bidegeneracy is the most efficient choice; 
the exceptions are \filmtrust and \bibsonomy, for which the choice between $U$ or $V$ has little impact on the computation time.

\subsection{About the memory usage}

While \bbk algorithm is in general more efficient than \oombea in terms of computation time, the \oombea implementation proposed by Chen~\etal.~\cite{chen2022efficient} is more economical when considering memory usage. 
To illustrate this, Figure~\ref{fig:mem} shows the memory used by \bbk and \oombea in logarithmic scale on the datasets of Table~\ref{tab:data}.

\begin{figure}[!hbtp]
  \centering
  \includegraphics[width=0.9\linewidth]{./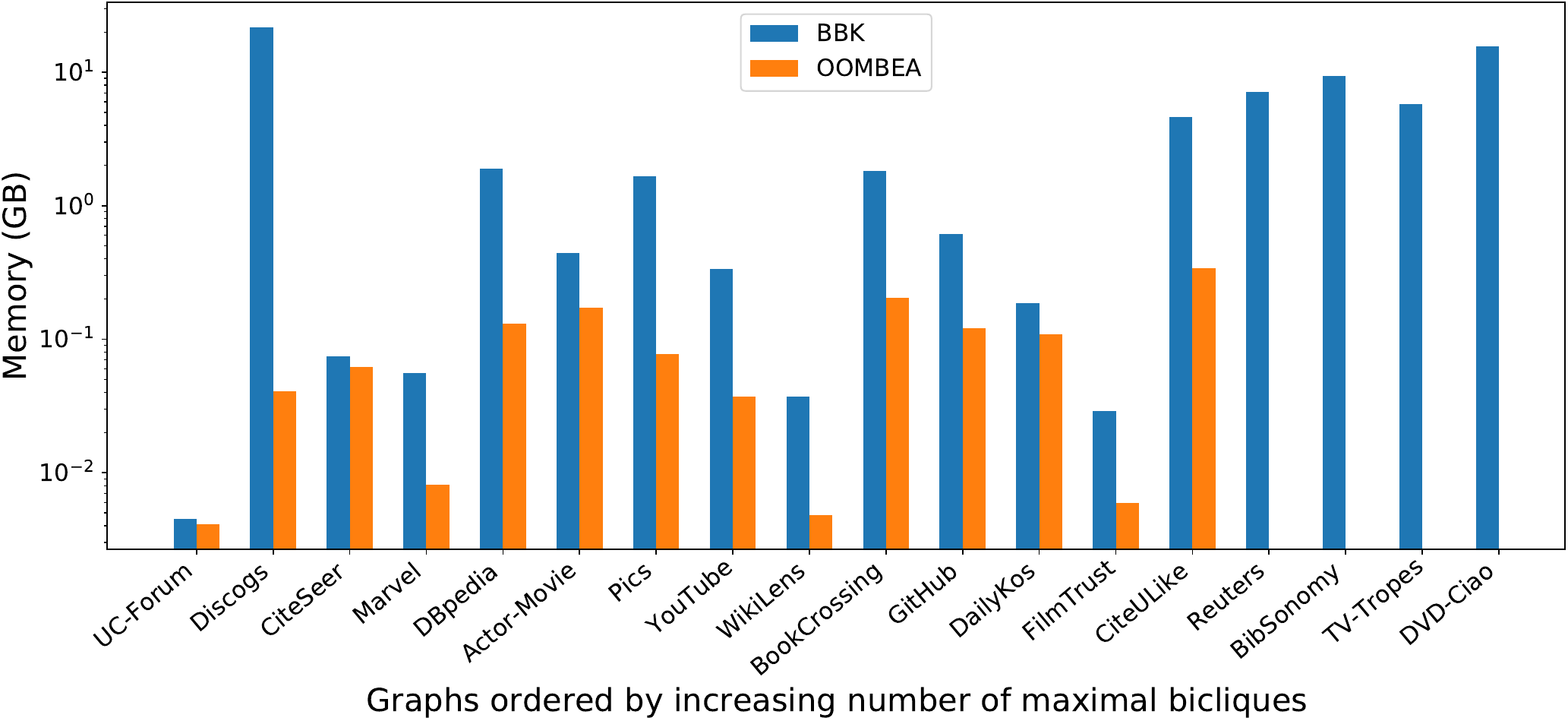}
  \caption{Memory used by the two algorithms \bbk and \oombea on the datasets of Table~\ref{tab:data}. For the four rightmost graphs, \oombea cannot complete the enumeration in less than one week, so its result is not displayed.}
  \label{fig:mem}
\end{figure}

Figure~\ref{fig:mem} shows that \oombea is able to enumerate the maximal bicliques using typically ten times less memory than \bbk (even more for the \discogs special case).
One explanation for this is that to gain efficiency within a recursive call, we use the method described by Eppstein~\etal.~\cite{eppstein2010listing}:
it consists in pre-allocating for each vertex a list of the size of its bidegeneracy.
As a vertex cannot belong to a biclique containing more vertices than its bidegeneracy, this size is a bound on the depth of the tree of calls to \BKMBE related to this vertex.
These lists are then used to record the end index of each vertex adjacency list used on a given recursive call:
it allows reducing the size of the used neighborhoods while avoiding wasting time copying them,
although it adds a factor $n_U \cdot \overline{b_U}$ to the memory complexity of the algorithm.
In bipartite graphs, the average bidegeneracy of vertices can be relatively large (see Table~\ref{tab:dmax}), which implies that storing this list may result in high memory requirements.

Note however that except for five graphs, the memory used by \bbk does not exceed 2~GB, and never exceeds about 20~GB, which is largely manageable on most modern computers.
So, in practice, memory is not the limiting factor for the enumeration of maximal bicliques, which is why we favor an algorithm aiming at time-efficiency.

\section{Conclusion}
\label{sec:conclusion}

In this article, we have introduced the \bbk algorithm for enumerating maximal bicliques in bipartite graphs, which aims at being time-efficient.
To do this, we have adapted the recursive Bron-Kerbosch algorithm to the context of bipartite graphs by using an extended graph on which finding cliques is equivalent to finding bicliques in the original instance. 
Moreover, we take advantage of the sparsity of real-world graphs by formulating the algorithm so as to use the neighborhood in the original bipartite graph only.
To improve its efficiency, the algorithm processes the vertices of the graph in an order that we call bidegeneracy order, which aims at reducing the set of candidate vertices at each recursive call.
We also add to the process a classic pivot-based pruning strategy, adapted to the context of bipartite graphs.
We have carried out a theoretical analysis of the \bbk algorithm to establish two complexity expressions: one as a function of its input and one as a function of its output characteristics.
Finally, we provide an open-source C++ implementation of \bbk, which we have used to illustrate the good performances experimentally on massive real-world datasets. These experiments shown that \bbk can enumerate maximal bicliques typically 10 times faster than the state of the art does on larger instances, 
and produces results in cases where the state of the art is unable to provide a solution within one week of computation.

We identify several directions which can be developed from this work.
One of them is the search for bicliques in non-bipartite graphs.
This question has been explored extensively probably because bicliques play an important role in the structure of real-world graphs such as protein interaction networks~\cite{liu2008searching}.
Indeed, as our approach essentially uses the neighbors and second neighbors of a node, it should be translatable to this context, yet concepts such as the vertex bidegeneracy order would have to be adapted accordingly.
Another interesting lead comes from the fact that finding bicliques in large bipartite graphs is similar to detecting closed itemsets in transaction databases, as mentioned earlier~\cite{makino2004new,zaki1998theoretical}.
Precisely, if we map items to set $U$ and transactions to $V$, an itemset with support larger than $s$ would be a subset of $U$ that forms a biclique with at least $s$ vertices of $V$.
Thus, adapting \bbk to this specific issue may bring new, efficient solutions to this problem.

\section*{Acknowledgments}

This work is funded in part by the ANR (French National Agency of Research) through the FiT-ANR-19-LCV1-0005 grant.

\bibliographystyle{abbrv}
\bibliography{biblio}

\begin{thebibliography}{10}

\bibitem{abidi2020pivot}
A.~Abidi, R.~Zhou, L.~Chen, and C.~Liu.
\newblock Pivot-based maximal biclique enumeration.
\newblock In {\em IJCAI}, pages 3558--3564, 2020.

\bibitem{agrawal1994fast}
R.~Agrawal, R.~Srikant, et~al.
\newblock Fast algorithms for mining association rules.
\newblock In {\em Proc. 20th int. conf. very large data bases, VLDB}, volume
  1215, pages 487--499. Santiago, Chile, 1994.

\bibitem{batagelj2003m}
V.~Batagelj and M.~Zaversnik.
\newblock An o (m) algorithm for cores decomposition of networks.
\newblock {\em arXiv preprint cs/0310049}, 2003.

\bibitem{baudin2023faster}
A.~Baudin, C.~Magnien, and L.~Tabourier.
\newblock Faster maximal clique enumeration in large real-world link streams.
\newblock {\em arXiv preprint arXiv:2302.00360}, 2023.

\bibitem{borgelt2012frequent}
C.~Borgelt.
\newblock Frequent item set mining.
\newblock {\em Wiley interdisciplinary reviews: data mining and knowledge
  discovery}, 2(6):437--456, 2012.

\bibitem{broder2000graph}
A.~Broder, R.~Kumar, F.~Maghoul, P.~Raghavan, S.~Rajagopalan, R.~Stata,
  A.~Tomkins, and J.~Wiener.
\newblock Graph structure in the web.
\newblock {\em Computer networks}, 33(1-6):309--320, 2000.

\bibitem{bron1973algorithm}
C.~Bron and J.~Kerbosch.
\newblock Algorithm 457: finding all cliques of an undirected graph.
\newblock {\em Communications of the ACM}, 16(9):575--577, 1973.

\bibitem{bu2003topological}
D.~Bu, Y.~Zhao, L.~Cai, H.~Xue, X.~Zhu, H.~Lu, J.~Zhang, S.~Sun, L.~Ling,
  N.~Zhang, et~al.
\newblock Topological structure analysis of the protein--protein interaction
  network in budding yeast.
\newblock {\em Nucleic acids research}, 31(9):2443--2450, 2003.

\bibitem{chabert2022impact}
S.-C. Chabert-Liddell, P.~Barbillon, and S.~Donnet.
\newblock Impact of the mesoscale structure of a bipartite ecological
  interaction network on its robustness through a probabilistic modeling.
\newblock {\em Environmetrics}, 33(2):e2709, 2022.

\bibitem{chen2022efficient}
L.~Chen, C.~Liu, R.~Zhou, J.~Xu, and J.~Li.
\newblock Efficient maximal biclique enumeration for large sparse bipartite
  graphs.
\newblock {\em Proceedings of the VLDB Endowment}, 15(8):1559--1571, 2022.

\bibitem{coates2020transfer}
D.~Coates, I.~Naidenova, and P.~Parshakov.
\newblock Transfer policy and football club performance: evidence from network
  analysis.
\newblock {\em International Journal of Sport Finance}, 15(3):95--109, 2020.

\bibitem{conte2022overall}
A.~Conte and E.~Tomita.
\newblock On the overall and delay complexity of the cliques and bron-kerbosch
  algorithms.
\newblock {\em Theoretical Computer Science}, 899:1--24, 2022.

\bibitem{damaschke2014enumerating}
P.~Damaschke.
\newblock Enumerating maximal bicliques in bipartite graphs with favorable
  degree sequences.
\newblock {\em Information Processing Letters}, 114(6):317--321, 2014.

\bibitem{das2018shared}
A.~Das, S.-V. Sanei-Mehri, and S.~Tirthapura.
\newblock Shared-memory parallel maximal clique enumeration.
\newblock In {\em 2018 IEEE 25th International Conference on High Performance
  Computing (HiPC)}, pages 62--71. IEEE, 2018.

\bibitem{dormann2009indices}
C.~F. Dormann, J.~Fr{\"u}nd, N.~Bl{\"u}thgen, and B.~Gruber.
\newblock Indices, graphs and null models: analyzing bipartite ecological
  networks.
\newblock 2009.

\bibitem{eppstein2010listing}
D.~Eppstein, M.~L{\"o}ffler, and D.~Strash.
\newblock Listing all maximal cliques in sparse graphs in near-optimal time.
\newblock In {\em International Symposium on Algorithms and Computation}, pages
  403--414. Springer, 2010.

\bibitem{gely2009enumeration}
A.~G{\'e}ly, L.~Nourine, and B.~Sadi.
\newblock Enumeration aspects of maximal cliques and bicliques.
\newblock {\em Discrete applied mathematics}, 157(7):1447--1459, 2009.

\bibitem{guillaume2006bipartite}
J.-L. Guillaume and M.~Latapy.
\newblock Bipartite graphs as models of complex networks.
\newblock {\em Physica A: Statistical Mechanics and its Applications},
  371(2):795--813, 2006.

\bibitem{hermelin2021efficient}
D.~Hermelin and G.~Manoussakis.
\newblock Efficient enumeration of maximal induced bicliques.
\newblock {\em Discrete Applied Mathematics}, 303:253--261, 2021.

\bibitem{hriez2021effective}
R.~F. Hriez, G.~Al-Naymat, and A.~Awajan.
\newblock An effective algorithm for extracting maximal bipartite cliques.
\newblock In {\em International Conference on Data Science, E-learning and
  Information Systems 2021}, pages 76--81, 2021.

\bibitem{huber2007graphs}
W.~Huber, V.~J. Carey, L.~Long, S.~Falcon, and R.~Gentleman.
\newblock Graphs in molecular biology.
\newblock {\em BMC bioinformatics}, 8(6):1--14, 2007.

\bibitem{data-dnc}
J.~Kunegis.
\newblock {KONECT} -- {The} {Koblenz} {Network} {Collection}.
\newblock In {\em Proc. Int. Conf. on World Wide Web Companion}, pages
  1343--1350, 2013.
\newblock Données disponibles à l'adresse: \url{http://konect.cc/networks}.

\bibitem{lehmann2008biclique}
S.~Lehmann, M.~Schwartz, and L.~K. Hansen.
\newblock Biclique communities.
\newblock {\em Physical review E}, 78(1):016108, 2008.

\bibitem{li2014name}
Y.~Li, A.~Wen, Q.~Lin, R.~Li, and Z.~Lu.
\newblock Name disambiguation in scientific cooperation network by exploiting
  user feedback.
\newblock {\em Artificial Intelligence Review}, 41:563--578, 2014.

\bibitem{liu2008searching}
H.-B. Liu, J.~Liu, and L.~Wang.
\newblock Searching maximum quasi-bicliques from protein-protein interaction
  network.
\newblock {\em Journal of Biomedical Science and Engineering}, 1(3):200, 2008.

\bibitem{makino2004new}
K.~Makino and T.~Uno.
\newblock New algorithms for enumerating all maximal cliques.
\newblock In {\em Algorithm Theory-SWAT 2004: 9th Scandinavian Workshop on
  Algorithm Theory, Humleb{\ae}k, Denmark, July 8-10, 2004. Proceedings 9},
  pages 260--272. Springer, 2004.

\bibitem{muhammad2016summarizing}
A.~S. Muhammad, P.~Damaschke, and O.~Mogren.
\newblock Summarizing online user reviews using bicliques.
\newblock In {\em SOFSEM 2016: Theory and Practice of Computer Science: 42nd
  International Conference on Current Trends in Theory and Practice of Computer
  Science, Harrachov, Czech Republic, January 23-28, 2016, Proceedings 42},
  pages 569--579. Springer, 2016.

\bibitem{newman2001scientific}
M.~E. Newman.
\newblock Scientific collaboration networks. ii. shortest paths, weighted
  networks, and centrality.
\newblock {\em Physical review E}, 64(1):016132, 2001.

\bibitem{pavlopoulos2018bipartite}
G.~A. Pavlopoulos, P.~I. Kontou, A.~Pavlopoulou, C.~Bouyioukos, E.~Markou, and
  P.~G. Bagos.
\newblock Bipartite graphs in systems biology and medicine: a survey of methods
  and applications.
\newblock {\em GigaScience}, 7(4):giy014, 2018.

\bibitem{qin2020efficient}
C.~Qin, M.~Liao, Y.~Liang, and C.~Zheng.
\newblock Efficient algorithm for maximal biclique enumeration on bipartite
  graphs.
\newblock In {\em Advances in Natural Computation, Fuzzy Systems and Knowledge
  Discovery: Volume 2}, pages 3--13. Springer, 2020.

\bibitem{simmons2019motifs}
B.~I. Simmons, A.~R. Cirtwill, N.~J. Baker, H.~S. Wauchope, L.~V. Dicks, D.~B.
  Stouffer, and W.~J. Sutherland.
\newblock Motifs in bipartite ecological networks: uncovering indirect
  interactions.
\newblock {\em Oikos}, 128(2):154--170, 2019.

\bibitem{voggenreiter2012exact}
O.~Voggenreiter, S.~Bleuler, and W.~Gruissem.
\newblock Exact biclustering algorithm for the analysis of large gene
  expression data sets.
\newblock {\em BMC bioinformatics}, 13(Suppl 18):A10, 2012.

\bibitem{yoon2007co}
S.~Yoon, L.~Benini, and G.~De~Micheli.
\newblock Co-clustering: a versatile tool for data analysis in biomedical
  informatics.
\newblock {\em IEEE Transactions on Information Technology in Biomedicine},
  11(4):493--494, 2007.

\bibitem{zaki1998theoretical}
M.~J. Zaki and M.~Ogihara.
\newblock Theoretical foundations of association rules.
\newblock In {\em 3rd ACM SIGMOD workshop on research issues in data mining and
  knowledge discovery}, pages 71--78, 1998.

\bibitem{zhang2014finding}
Y.~Zhang, C.~A. Phillips, G.~L. Rogers, E.~J. Baker, E.~J. Chesler, and M.~A.
  Langston.
\newblock On finding bicliques in bipartite graphs: a novel algorithm and its
  application to the integration of diverse biological data types.
\newblock {\em BMC bioinformatics}, 15:1--18, 2014.

\bibitem{zhou2007bipartite}
T.~Zhou, J.~Ren, M.~Medo, and Y.-C. Zhang.
\newblock Bipartite network projection and personal recommendation.
\newblock {\em Physical review E}, 76(4):046115, 2007.

\bibitem{zhu2015measuring}
Z.~Zhu, J.~Su, and L.~Kong.
\newblock Measuring influence in online social network based on the
  user-content bipartite graph.
\newblock {\em Computers in Human Behavior}, 52:184--189, 2015.

\end{thebibliography}

\end{document}